\definecolor{amber}{RGB}{255,191,0}
\definecolor{salmon}{RGB}{250, 128, 114}
\newcommand*\ecoimp{\texttt{eco-imp}\xspace}
\newcommand*\evimp{\texttt{ev-imp}\xspace}
\newcommand*{\raml}{\textsf{RaML}}
\newcommand*{\atlas}{\ensuremath{\mathsf{ATLAS}}}
\begin{document}

\title{Automated Expected Value Analysis of Recursive Programs}

%% Contents and number of authors suppressed with 'anonymous'.
\author{Martin Avanzini}
\orcid{0000-0002-6445-8833}
\affiliation{
%  \department{Department1}              %% \department is recommended
  \institution{INRIA Sophia Antipolis Méditerranée}            %% \institution is required
%  \streetaddress{Building Fermat, F125}
  \city{Route des Lucioles - BP 93}
%  \postcode{2004, Route des Lucioles - BP 93}
  \country{France}                    %% \country is recommended
}
\email{martin.avanzini@inria.fr}

\author{Georg Moser}
\orcid{0000-0001-9240-6128}
\affiliation{
%  \department{Department of Computer Science}
  \institution{University of Innsbruck}
% \streetaddress{Technikerstr. 21A}
  \city{Innsbruck}
%  \postcode{6020}
  \country{Austria}
}
\email{georg.moser@uibk.ac.at}

\author{Michael Schaper}
\orcid{0000-0002-4795-0615}
\affiliation{
  \institution{Build Informed}
% \streetaddress{Egger-Lienz-Straße 130}
  \city{Innsbruck}
%  \postcode{6020}
  \country{Austria}
}
\email{mschaper@posteo.net}

\begin{abstract}
In this work, we study the fully automated inference of expected result values of probabilistic programs in the presence of natural programming constructs such as procedures, local variables and recursion. While crucial, capturing these  constructs becomes highly non-trivial. The key contribution is the definition of a term representation, denoted as $\ET{\cdot}$, translating a pre-expectation semantics into first-order constraints, susceptible to automation via standard methods. A crucial step is the use of logical variables, inspired by previous work on Hoare logics for recursive programs. Noteworthy, our methodology is not restricted to tail-recursion, which could unarguably be replaced by iteration and wouldn't need additional insights.
We have implemented this analysis in our prototype~\evimp. We provide ample experimental evidence of
the prototype's algorithmic expressibility.
\end{abstract}

\keywords{static program analysis, probabilistic programming, expected value analysis, weakest pre-expectation semantics, automation}

\maketitle

\renewcommand{\bnfcmt}[1]{\emph{(#1)}}
\section{Introduction}
\label{sec:introduction}

The \emph{verification} of the \emph{quantitative} behaviour of probabilistic programs is a highly active field of research, motivated partly by the recent success of machine learning methodologies (see eg.~\cite{BatzKKMN19,EberlHN20,VasilenkoVB22}).
Without verification, bugs may hide in correctness arguments and subsequent implementations, in particular,
as reasoning about probabilistic programs (or data structures for that matter) is highly non-trivial and error prone.

Instead of verifying quantitative program behaviour semi-automatically, it would be desirable to fully automatically \emph{infer} such estimates.
For example, the goal of inference could be an approximate but still precise computation of \emph{expected (amortised) costs} (eg.~\cite{AMS20,WangKH20,LMZ:2022}) or \emph{quantitative invariants} (eg.~\cite{WangHR18,BaoTPHR:2022}).%
\footnote{The need for automated techniques that analyse eg. the computational cost of code has also been recognised in large software companies. For example at Facebook, one routinely runs a cost analysis on the start-up routines in order to ensure a fast loading of the Facebook web page~\cite{DFLO:2019}.}
The goal of inference is to compute approximations that are as precise as possible, which requires the use of optimisation techniques. As there are now powerful optimising constraint solvers available, such as Z3~\cite{MouraB08} or OptiMathSAT~\cite{SebastianiT20}, we may even hope to automatically conduct optimisations that earlier required intricate analysis with pen and paper, cf.~\cite{LMZ:2021,LMZ:2022}.

The central contribution of our work is the (fully) automated inference of expected result values of probabilistic programs
in the presence of natural programming constructs such as procedures, local variables and recursion.
While crucial, capturing these  constructs becomes highly non-trivial.
To analyse recursive procedures successfully, it is required to properly model the call-stack.
In particular, for each (recursive) procedure call the program
context may change, that is, the analysis of (general) recursion requires
some form of parametricity.%
\footnote{In the context of automated cost analysis this problem has been partly addressed through the notion of \emph{resource parametricity}~\cite{Hoffmann11}.}
To the best of our knowledge, prior work fails in general to deal with these natural constructs.

Upper invariants (on the expected value) constitute a liberalisation of exact quantitative invariants. This approximated rendering of invariants allows for easier (and thus more powerful) automation. Such an analysis may also act as a stepping stone towards the incorporation of support for automation into ITPs.
To wit, a recent and partly motivating work is~\citet{VasilenkoVB22}. We take up one
of their motivating examples---see Listing~\ref{fig:motivation}(\subref{lst:balls}) below---and show how a variant of
this example becomes susceptible to full automation.

Our analysis is based on a weakest pre-expectation semantics~\cite{MM05} (see also~\cite{GKM14,KKMO:ACM:18})---an axiomatic semantic in Dijkstra's spirit---for a simple imperative language \PWHILE\ endowed with sampling instructions,
non-deterministic choice, lexically scoped local variables, and, crucially, recursive procedures.
On a conceptual level, this semantics---denoted as $\et{\prog}$ for a program $\prog$---capture the expected behaviour of probabilistic programs.

Automation of a weakest pre-expectation semantics in the context of recursive procedures is highly non-trivial.%
\footnote{Automation is here understood here as ``push-button'' automation; no user-interaction is required.}
Technically, this is due to the fact that well-definedness of the semantics requires the existence of \emph{higher-order} fixed-points.
Full automation, however, requires the inference of (upper) bounds on closed-forms of such fixed-points.
We overcome this challenge through the definition of a suitable \emph{term representation}---denoted as $\ET{\prog}$---of the aforementioned pre-expectation semantics $\et{\prog}$. This syntactic representation translates the pre-expectation semantics into
first-order constraints, susceptible to automation via standard methods.
To provide for the aforementioned parametricity in this analysis, the use of logical variables---inspired by the work on Hoare logics for recursive programs~\cite{Kleymann99}---is essential.
We make use of a template approach and employ Z3 as suitable optimising SMT solver. Perhaps the closest comparison is to work on amortised cost analysis of functional languages (eg.~\cite{WangKH20,LMZ:2022}).

\paragraph{Contributions.}

In sum, we present a novel methodology for the
\emph{automated} \emph{expected value} analysis of non-deterministic, probabilistic and recursive programs.
Our starting point is a natural \emph{weakest pre-expectation semantics} for
\PWHILE\ in the form of an \emph{expectation transformer} $\et{\prog}$, capturing also
natural program constructs such as lexically scoped
local variables, procedure parameters, unrestricted return statements, etc.
Our main contribution lies in a novel \emph{term representation} of
the aforementioned expectation transformer, denoted as $\ET{\prog}$.
Its definition follows the pattern of $\et{\prog}$, but notably differs
in the definition of procedure calls and loops,
where we replace the underlying fixed-point constructions via suitably constrained first-order templates.
Through this step, we manage to tightly over-approximate the precise, but higher-order semantics via
first-order constraint generation susceptible to automation.
Second, we establish a new and original \emph{automation} of this quantitative analysis
in our prototype implementation~\evimp.

\paragraph{Outline.}
In Section~\ref{Overview} we provide a bird's eye view
on the contributions of this work. In Section~\ref{PWhile} we detail the syntactic
structure of our language \PWHILE\ and provide the definition of the aforementioned
weakest pre-expectation semantics. Section~\ref{Inference} constitutes the main technical
part of the work, detailing the conception and definition of a term representation
of the weakest pre-expectation semantics, susceptible to automation. Conclusively in Section~\ref{Implementation}
we discuss implementation choices for our prototype implementation, the chosen benchmark suites and the experimental evaluation. Finally, we conclude with related works and future work in Sections~\ref{Related} and \ref{Conclusion}, respectively.
Omitted proofs can be found in the Supplementary Material.

%FINALISE
%\todo[inline]{arxiv version}

\section{Expected Value Analysis, Automated}
\label{Overview}

\begin{figure*}
\caption{Textbook Examples on Expected Value Analysis}
\label{fig:motivation}  
\smallskip  
\centering
\begin{subfigure}[b]{0.34\textwidth}
\begin{lstlisting}[style=pwhile,style=framed,mathescape]
def balls(n):
  var $b \ass 0$
  if (n > 0) {
    b $\ass$ balls(n-1);
    if (Bernoulli($\sfrac{1}{5}$)) {b $\ass$ b + 1}
  };
  return b
\end{lstlisting}
\caption{Balls in a single bin.}
\label{lst:balls}
\end{subfigure}
\hspace{1ex}\hfill
\begin{subfigure}[b]{0.25\textwidth}
\begin{lstlisting}[style=pwhile,style=framed,mathescape]
def throws():
  if (Bernoulli($\sfrac{1}{5}$)) {
    return 1
  } else {
    return (1 + throws())
  }

\end{lstlisting}
\caption{Throws for a hit.}
\label{lst:number}
\end{subfigure}
\hspace{1ex}\hfill
\begin{subfigure}[b]{0.35\textwidth}
\begin{lstlisting}[style=pwhile,style=framed]
def every(i):
  if (0 < i <= 5) {
    if (Bernoulli($\sfrac{\tt i}{5}$)) {i $\ass$ i - 1};
    return (1 + every(i))
  } else {
    return 0
  }
\end{lstlisting}
\caption{Every bin contains at least one ball.}
\label{lst:every}
\end{subfigure}
\vspace{-3mm}
\end{figure*}

We motivate the central contribution of our work: an \emph{automated} expected result value
analysis of non-deterministic, probabilistic programs, featuring \emph{recursion}.
First, we present three textbook examples (and their encoding in \PWHILE) motivating the interest and importance
of an \emph{expected value} analysis, in contrast to eg.\ an expected cost analysis
(see eg.~\cite{KKMO:ACM:18,NgoCH18,WFGCQS:PLDI:19,AMS20,WangKH20,LMZ:2022})
or the inference of quantitative invariants (see eg.~\cite{KatoenMMM10,WangHR18,BaoTPHR:2022}).
Second, we emphasise the need for formal verification techniques going beyond pen-and-paper
proofs and provide the intuition behind the thus chosen methodology of so-called \emph{expectation transformers}.
Third, we detail challenges posed by \emph{recursive procedures} for the formal definition of
our expectation transformers.
Finally, we highlight the challenges of automated verification techniques in this context. 

\paragraph*{Textbook examples.}

To motivate expected value analysis in general, we study corresponding textbook examples---taken from
Cormen et al.~\cite{Cormen:2009}---and suitable code representations thereof, depicted in Figure~\ref{fig:motivation}.
Consider an unspecified number of bins and suppose we throw balls towards the bins.
Let us attempt to answer the following three questions:
\begin{inparaenum}[(i)]
\item \emph{How many balls will fall in a given bin?}
\item \emph{How many balls must we toss, on average, until a given bin contains a ball?} and finally
\item \emph{How many balls must we toss until every bin contains at least one ball?}
\end{inparaenum}
  
The first two questions can be easily answered given some mild background in probability theory.
If we throw say~$n$ balls, the number of balls per bin follows a binomial distribution.
Assuming we will always hit at least one bin, then the success probability is $\sfrac{1}{bins}$,
where $bins$ denotes the number of bins.
Thus, the expected value of the number of balls in a single bin is given by $\sfrac{1}{bins} \cdot n$.
Success in the second problem follows a geometric distribution. Thus the answer is $\frac{1}{\sfrac{1}{bins}} = bins$. But the last one is more tricky and involves a more intricate argumentation. Note that this problem
and thus the procedure \pwhile!every!
is equivalent to the \emph{Coupon Collector} problem, cf.~\cite{MitzenMacherU05,Cormen:2009,AMS20}.
Following the argument in~\cite[Chapter~5.4]{Cormen:2009}, we say that we have a ``hit'', if we have successfully hit a specific bin. Using the notion of hits, we can split the task into stages, each corresponding to a completed hit. If all stages are complete, we are done. To complete stage $k$, we need to hit the $k^{th}$ bin. Thus, the success
probability to complete this stage is given as $\sfrac{bins-k+1}{bins}$.
Now let $n_k$ denote the number of throws in the $k^{th}$ stage. Thus, the total number required to fill the bins is $n = \sum_{k=1}^{bins} n_k$. As $\EEop[n_k] = \sfrac{bins}{bins-k+1}$, we obtain from linearity
of expectations
\begin{equation*}
  \EEop[n] = \EEop[ \sum_{k=1}^{bins} n_k] = \sum_{k=1}^{bins} \EEop[n_k]
  = \sum_{k=1}^{bins} \frac{bins}{bins-k+1} = bins \cdot \sum_{k=1}^{bins} \frac{1}{k}
  \tpkt
  \tkom
\end{equation*}
Further, we obtain $bins \cdot \sum_{k=1}^{bins} \sfrac{1}{k} \in \Theta(bins \cdot \log(bins))$,
utilising that the \emph{harmonic number} $H_{bins} = \sum_{k=1}^{bins} \sfrac{1}{k}$
is asymptotically bounded by $\log(bins)$, cf.~\cite{GKP1994}.

\paragraph*{Probabilistic, recursive procedures.}

As depicted in Listings~\ref{fig:motivation}(\subref{lst:balls}--\subref{lst:every}), it is easy to encode the above questions as (probabilistic, recursive) procedures.%
\footnote{We emphasise that procedure \pwhile!balls! constitutes a variant of the leading example in~\cite{VasilenkoVB22}. Using refinement types, \citet{VasilenkoVB22} prove that the expected value of $b$ is $p \cdot n$, where $p$ denotes
the probability of hitting the given bin ($p=\sfrac{1}{5}$ in our rendering). Their analysis, however, is only semi-automated.}
The encoding fixes the number of bins to five.
The command \pwhile!Bernoulli($p$)! draws a value from a Bernoulli distribution, returning $1$ with probability $p$ and $0$ with probability $(1-p)$.%
\footnote{For ease of encoding,  Listing~\ref{fig:motivation}(\subref{lst:every}) employs the sampling from \pwhile!Bernoulli($\sfrac{i}{5}$)!. Setting $i = bins-k+1$ and $bins=5$, we recover the textbook argumentation.}
Our simple imperative language \PWHILE\ follows the spirit of Dijkstra's \emph{Guarded Command Language}, see Section~\ref{PWhile} for the formal details. Apart from randomisation and recursion, the encoding makes---we believe natural---use of local variables, formal parameters and return statements.

The textbook questions above now become (possibly intricate) questions on program behaviours, that is,
on the \emph{expected value} of program variables wrt.\ the memory distributions in the final program
state. Expected value analysis encompasses expected cost analysis, as we can often, that is, for almost surely terminating programs, represent program costs through a dedicated counter. Similarly, upper bounds to expected values
as considered here, form approximations of quantitative invariants. In the experimental
validation of our prototype implementation, we will see that wrt.\ the latter often our bounds are in fact exact, that is,
constitute \emph{invariants} (see Section~\ref{Implementation}).

It is well-known that in general pen-and-paper analyses of the expected value of programs are not
an easy matter---as we have for example seen in the answer to the last question above---%
and more principled methodologies are essential, in particular if our focus is on (full) automation.
To this avail, we build upon earlier work
on \emph{weakest pre-expectation semantics} of (probabilistic) programs and develop an
\emph{expectation transformer} for \PWHILE.

\paragraph{Expectation Transformers.}

Predicate transformer semantics---introduced in the seminal works of
\citet{Dijkstra75}---map each program statement to a
function between two predicates on the state-space of the program.
Their semantics can be viewed as a reformulation of Floyd–Hoare
logic~\cite{Hoare69}.
Subsequently, this methodology has been extended to randomised programs by replacing predicates
with so-called expectations---real valued functions on the program's state-space $\Mem{V}$---%
leading to the notion of \emph{expectation transformers} (see~\cite{MM05,KKMO:ACM:18})
and the development of \emph{weakest pre-expectation semantics}~\cite{GKM14}.

In the following, we impose a pre-expectation semantics on programs in~\PWHILE, thereby providing
a formal definition of the expected value of program values. Adequacy of such semantics is well understood in the literature, for example wrt. operational semantics based on Markov Decision Processes~\cite{KKMO:ACM:18} or
probabilistic abstract rewrite systems~\cite{AMS20}.
The difference to our work seems incremental to us. Thus, our starting point are pre-expectation transformer
semantics, permitting us to focus on inference.

For a given command $\cmd$, its \emph{expectation transformer} $\et{\cmd}$
(detailed in Figure~\ref{fig:et} in Section~\ref{PWhile})
maps a post-expectation $f$ to a pre-expectation, measuring the expected value that $f$ takes
on the distribution of states resulting from running~$\cmd$, as a function in the initial state of $\cmd$.
Slightly simplifying, the expectation transformer assigns semantics to \emph{commands}
\[
  \et{\cmd} : (\Mem{V} \to \Rext) \to (\Mem{V} \to \Rext) \tpkt
\]
In the special case where~$f$ is a predicate, ie.\ a $\{0,1\}$ valued function on memories,
$\et{\cmd}\app f \app \mem$ yields the probability that $f$ holds after completion of~$\cmd$ for any initial state~$\mem$;
thereby generalising Djikstra's and Nielson's weakest-precondition transformer, cf.~\cite{Nielson87,KKMO:ACM:18}.
This definitions is probably most easily understood as a denotational semantics in continuation passing style, with post-expectation $f$ interpreted as \emph{continuation}, cf.~\cite{FW:2008,ABD:2021}.

In this reading, $f \colon \Mem{V} \to \Rext$ measures a quantity on the continuations outcome
given its inputs---in expectation---
while $\et{\cmd} \app f$ performs a backward reasoning lifting $f$ to a function
on initial states of $\cmd$. (In the following, we use ``continuation'' and ``expectation'' interchangingly.)

Let us consider the procedure \pwhile!balls! depicted in Figure~\ref{fig:motivation}(\subref{lst:balls}).
As mentioned, the final value of $b$, the number of balls hitting a dedicated bin,
follows a binomial distribution. In each recursive call the success probability is $\sfrac{1}{5}$. Thus, the expected value of the number of balls in a single bin is $\sfrac{1}{5} \cdot n$.
To provide a bird's view on the working of expectation transformers in this context, we verify this simple equality
in the sequel. 
Calculating the pre-expectation from the continuation $f$ is straightforward, as long
as the program under consideration contains neither recursive calls nor loops.
To wit, let $\normf{b}$ be the real-valued function that measures (the positive value of) variable $\pwhile!b!$ in a given memory.
Then
\begin{align*}
  \et{\pwhile!if (Bernoulli($\sfrac{1}{5}$)) \{b=b+1\}!}\app \normf{b}
  = \sfrac{1}{5} \cdot \et{\pwhile!b=b+1!} \app \normf{b} \mathrel{\plusf} \sfrac{4}{5} \cdot \normf{b}
  = \sfrac{1}{5} \cdot \normf{b+1} + \sfrac{4}{5} \cdot \normf{b}
  \tpkt
\end{align*}
Here, addition and scaling by a constant on expectations should should be understood point-wise.
Notice how this expression captures the fact \pwhile!b! is incremented only with probability $\sfrac{1}{5}$.
\emph{Composability} of the transformer extends this kind of reasoning to sequences of command,
in the sense that command composition is interpreted as the composition of the corresponding expectation transformers. For straight-line programs $\cmd$, $\et{\cmd} \app f$ can be computed
in a bottom up fashion.

\paragraph{Loops and Recursive Procedures.}

When dealing with loops or recursive procedures though, the definition of $\et{\cmd}$ becomes
more involved. As usual in giving denotational program semantics,
the definition of the expectation transformer of these self-referential program constructs is based on a
fixed-point construction, cf.~\cite{MM05,KatoenMMM10,KKMO:ACM:18,WangHR18,AMS20,BaoTPHR:2022}.
This permits attributing precise semantics to programs.
In our setting, \emph{procedures} $\fn$ are interpreted via an expectation transformer
\[
  \et{\fn} : (\Val \times \GMem \to \Rext) \to (\Val^{\ar{\fn}} \times \GMem \to \Rext)
  \tkom
\]
where $\GMem$ the set of global memories.
Following the definition of $\fn$, $\et{\fn}$ turns a continuation, parameterised in the return value and global memory,
into a pre-expectation in the formal parameters of $\fn$ and the global memory before execution of $\fn$.
The definition (formalised in Section~\ref{PWhile}) is slightly technical to permit recursive definitions and
to ensure proper passing of arguments and lexical scoping.
For a procedure declared as $\cdef{\fn}{\vec{\vx}}{\cmd}$ and a post-expectation $f$,
$\et{\fn} \app f$ is given by the transformer $\et{\cmd}$ associated to its body $\cmd$,
initialising the parameters~$\vec{\vx}$ accordingly
and with the post-expectation $f$ applied whenever $\cmd$ leaves its scope through a return-statement.

Let us re-consider the procedure \pwhile!balls! from Listing~\ref{fig:motivation}(\subref{lst:balls}).
We use \emph{Iverson brackets} $\brac{\cdot}$ to lift predicates on memories to expectations,
that is, $\bracf{\bexpr}(\mem) \defsym 1$, if $\bexpr$ holds in memory $\mem$,
and $\bracf{\bexpr}(\mem) \defsym 0$, otherwise.
Since the program does not make use of global variables,
we elide representation of the empty global memory for brievity.
Thus, for an arbitrary continuation~$f : \Val \to \Rext$,
the expectation transformer of $\pwhile!balls!$ is given by the least functional 
satisfying:\footnote{Note that the functional is also ordered point-wise. Well-definedness of the employed fixed-point construction rests on the observations that expectation transformers form \emph{$\omega$-CPOs} \cite{Winskel:93}.}
\begin{equation}
  \label{eq:et-balls}
  \tag{\dag}
  \et{\pwhile!balls!} \app f = \lam{n}{[n>0] \cdot \et{\pwhile!balls!} \app \bigl(\lam{b}{\sfrac{1}{5}\cdot f(b+1) + \sfrac{4}{5}\cdot f(b)}\bigr) \app (n-1) + [n\leq 0]\cdot f(0) }
  \tkom
\end{equation}
where the continuation $f$ passed to the recursive call of $\et{\pwhile!balls!}$
is computed as above. Thus, the continuation to the call to \pwhile!balls! probabilistically updates the returned value.
Arguing inductively, we see that $\et{\pwhile!balls!} \app f_{r} = \lam{n}{\sfrac{1}{5} \cdot n + r}$
for any continuation of the form $f_{r} \defsym \lam{b}{b+r}$, where $r$ denotes a non-negative real.
Since $f_{0}$ measures the return value, we conclude that
on argument $n$, $\et{\pwhile!balls!}$ returns a value of $\sfrac{1}{5} \cdot n$ in expectation.
Conclusively, we re-obtain that a fifth of the thrown balls will fall in
the considered bin.

In the same vein, the analysis of the return value of \pwhile!throws! and \pwhile!every!---in expectation---is performed for Listings~\ref{fig:motivation}(\subref{lst:number}) and~\ref{fig:motivation}(\subref{lst:every}), respectively.
In this fashion, the translation of the above pen-and-paper analysis for the textbook examples
into the formalism of an expectation transformer results into a rigorous and principled methodology.
This builds a sound (and suitable) foundation for subsequent automation.

\paragraph{Automation.}
The above calculation of the expected return value of procedure \pwhile!balls! gives evidence
on the advantages of a formal methodology over an ad-hoc pen-and-paper analysis.
The crux of turning such a calculus into a fully automated analysis lies in automatically
deriving closed forms for expected values of loops and recursive procedures.
Related problems have been extensively studied in the literature, eg.~\cite{PR04,contejean:2005,FGMSTZ:SAT:07,SZV:2014}.
One prominent approach lies in assigning \emph{templates} to expected value functions, by which
the definition of the expectation transformer can be reduced to a set of constraints treatable with off-the-shelf SMT solvers, like Z3~\cite{MouraB08}.

Following this approach, we express values as
linear combination $\sum_i c_i \cdot b_i$ of \emph{base functions} (or \emph{norms}) $b_i$
with variable coefficients $c_i$, mapping program valuations to (non-negative) real numbers. 
Norms serve as a numerical abstraction of memories and encompass a variety
of common abstractions, for example the absolute value of a variable, the difference between two variables
and more generally arbitrary polynomial combinations thereof.
Often, expected values can be computed symbolically on such \emph{value expressions}.
Concerning recursive procedures or loops, the definition of the expectation transform is in essence recursive itself. Rather than computing this fixed-point directly, we make use of Park's theorem~\cite{Wechler92,KKMO:ACM:18} and seek an upper bound in closed-form. In the context of recursive procedure, we also have to model the call-stack appropriately. 

Concerning recursive procedures,
our solution is to represent every higher-order functional $\et{\fn}$ through a pair of first-order templates
$\soa{\termh[]{\fn}}{\termk[]{\fn}}$,
representing pre- and post-expectations respectively.
Concretely, such a template signifies $\et{\fn} \app \termk{\fn} \leq \termh{\fn}$.
Inspired by Hoare-calculi for recursive programs (cf.~\cite{Kleymann99}),
we index templates by \emph{logical variables} $\lv$, kept implicit in the sequel, but emphasised here for clarity of exposition. Thus indeed, a template represents
a \emph{families of} pre-/post-expectations.
To wit, let us revisit procedure \pwhile!balls! in Listing~\ref{fig:motivation}(\subref{lst:balls}) once more.
We take the templates:
\begin{align*}
\termh[\lv]{\pwhile!balls!} & \defsym \lam{n}{c_{0} + c_{1} \cdot \normf{n} + c_{2} \cdot \lv}
  & \termk[\lv]{\pwhile!balls!} & \defsym \lam{b}{\normf{b} + \lv}
                             \tkom
\end{align*}
Notice that the logical variable $\lv$ should only be instantiated non-negatively, so as to ensure
that these templates indeed capture (non-negative) expectations.
Thus, the indended meaning of this template for \pwhile!balls!
is captured by the constraint
\begin{align}
  \label{eq:mainconstr}
  \tag{\ddag}
  \forall \lv \geq 0.\, \forall n.\, \et{\pwhile!balls!} \app \termk[\lv]{\pwhile!balls!} \app n
  % = \et{\pwhile!balls!} \app (\lam{b}{\normf{b} + \lv})
  \leq
  % \lam{n}{c_{0} + c_{1} \cdot \normf{n} + c_{2} \cdot \lv} =
  \termh[\lv]{\pwhile!balls!} \app n
  \tkom
\end{align}
Symbolic evaluation of the left-hand side---making use of the template for calls to \pwhile!balls!---%
can now be used to sufficiently constrain the undetermined coefficients.
Concretely~\eqref{eq:mainconstr} is representable via the following 
three constraints, where the variables $b,n$ and $\lv$ are universally quantified,
while the unknowns $c_{0}$, $c_{1}$, $d_{0}$ and $d_{1}$ are existentially quantified.
\begin{align}
  & 0 \leq \lv \entails \sfrac{1}{5}\cdot \normf{b+1} + \sfrac{4}{5}\cdot \normf{b} + \lv \leqf \normf{b} + (d_{0} + d_{1} \cdot \lv) \label{balls-c1}\tag{c1} \\
  & 0 \leq \lv \entails 0 \leqf d_{0} + d_{1} \cdot \lv \label{balls-c2}\tag{c2} \\
  & 0 \leq \lv \entails \bracf{n > 0} \cdot (c_{0} + c_{1} \cdot \normf{n-1} + c_{2}\cdot(d_{0} + d_{1} \cdot \lv)) + \bracf{n \leq 0} \cdot \lv \leqf c_{0} + c_{1}\normf{n} + c_{2}\cdot \lv \label{balls-c3}\tag{c3}
    \tpkt
\end{align}
To see how these constraints have been formed, recall the fixed-point equation for $\et{\pwhile!balls!}$
in \eqref{eq:et-balls}, instantiating $f = \termk[\lv]{\pwhile!balls!}$.
Constraint~\eqref{balls-c1} expresses that the post-expectation
\[
\lam{b}{\sfrac{1}{5}\cdot \termk[\lv]{\pwhile!balls!}(b+1) + \sfrac{4}{5}\cdot \termk[\lv]{\pwhile!balls!}(b)}
= \lam{b}{\sfrac{1}{5}\cdot \normf{b+1} + \sfrac{4}{5}\cdot \normf{b} + \lv}
\tkom
\]
passed to the call of $\et{\pwhile!balls!}$ in~\eqref{eq:et-balls} is over-approximated by instance $\termk[{d_{1} + d_{1} \cdot \lv}]{\pwhile!balls!}$,
taking a (to be further determined) instantiation $\lv \mapsto d_{1} + d_{1} \cdot \lv$ of the logical variable $\lv$.
This substitution is guaranteed non-negative (and thus well-defined) by constraint~\eqref{balls-c2}.
Thus taking~\eqref{balls-c1} into account, y \eqref{eq:mainconstr} e have
\[
  \et{\pwhile!balls!} \app (\lam{b}{\sfrac{1}{5}\cdot \termk[\lv]{\pwhile!balls!}(b+1) + \sfrac{4}{5}\cdot \termk[\lv]{\pwhile!balls!}(b)}) \app (n - 1)
  % \leq
  % \et{\pwhile!balls!} \app \termk[{d_{1} + d_{1} \cdot \lv}]{\pwhile!balls!} \app (n-1)
  \leq \termh[{d_{0} + d_{1} \cdot \lv}]{\pwhile!balls!} \app (n-1)
  \tpkt
\]
Making use of this over-approximation in \eqref{eq:et-balls},
it is now evident that the final constraint~\eqref{balls-c3} witnesses satisfiability of the main constraint~\eqref{eq:mainconstr}.
These three constraints can be met by taking $c_{1},d_{0} = \sfrac{1}{5}$, $c_{2},d_{1} = 1$, and $c_{0} = 0$.

We re-obtain (unsurprisingly) that the expected return value of procedure \pwhile!balls! is given as $\sfrac{1}{5} \cdot n$.
We emphasise that parameterising templates through logical variables is crucial, even for relatively simple examples
such as \pwhile!balls!. Since the program is not tail-recursive, the
continuation changes after each recursive call.
Parametricity allows us to vary templates across recursive calls, as we have done above through instantiating $\lv$ by ${d_{0} + d_{1} \cdot \lv}$.

We have successfully automated this approach in our prototype implementation~\evimp,
see Section~\ref{Implementation}. Automation is based on our first contribution,
the development of a \emph{term representation} of the expectation transformer inducing an inference method that describes the generation of constraints.
In the context of procedure \pwhile!balls!, this representation reduces the task of finding a functional
satisfying~\eqref{eq:et-balls} to the definition of a set of constraints like~\eqref{eq:mainconstr}, whose solution
yields over-approximations of the function graph of $\et{\pwhile!balls!}$. (See Section~\ref{Inference} for the details.)
Based on this formal development, our tool~\evimp\ proceeds with an analysis as
outlined, and integrates a dedicated constraint solver for solving constraints of the form~\eqref{balls-c1}--\eqref{balls-c3}.
Apart from handling recursive procedures, we have incorporated the constraints for handling loop programs.
Here, we take inspiration from~\cite{AMS20} to guarantee a \emph{modular} (and thus \emph{scalable}) inference of upper bounding functions.

Wrt.\ the three motivating examples in Figure~\ref{fig:motivation}, our tool derives (upper-bounds to) the
corresponding expected return values in milliseconds. The bounds for the procedures in Listings~\ref{fig:motivation}(\subref{lst:balls}) and~(\subref{lst:number}) are precise, respectively. For the procedure \pwhile!every! we, however,
only manage to derive a (sound) upper bound. The latter is not surprising. As shown in the textbook proof the
expected number of throws is given as $\Theta(bins \cdot \log (bins))$ in general. Our template approach does not (yet)
support logarithmic functions. Hence, we cannot hope to derive the precise bound fully automatically.
To the best of our knowledge, our prototype~\evimp is the only existing tool able to fully automatically
analyse the expected result value (or expected cost for that matter) of probabilistic, recursive programs,
if formulated in an imperative language.
The complete evaluation of our prototype implementation is given in Section~\ref{Implementation}.

\section{An Imperative Probabilistic Language}
\label{PWhile}

We consider a small \emph{imperative language} in the spirit of Dijkstra's \emph{Guarded Command Language}.
In particular, the language features
\begin{inparaenum}[(i)]
\item dynamic sampling instructions; 
\item non-deterministic choice, formalised via a  non-deterministic choice operator $\ndci$;
\item nested loops; and
\item crucially recursive procedure declarations.
\end{inparaenum}
In this section, we first formalise the syntax and then endow it with axiomatic, pre-expectation
semantics.

\paragraph{Syntax.}
Let $\Var = \{\vx,\vy,\dots\}$ be a set of (integer-valued) program \emph{variable}.
We fix three syntactic categories of
\emph{Boolean} valued \emph{expressions} $\BExpr \app V$,
\emph{(integer valued) expressions} $\Expr \app V$, and
\emph{sampling instructions} $\SExpr \app V$ over (finitely many) variables $V \subseteq \Var$, respectively.
In the following, $\bexpr$ will range over Boolean, $\expr,\exprtwo$ over integer valued expressions and $\dexpr$ over sampling instructions.
Furthermore, let $\Proc = \{\fn,\fntwo, \dots\}$ be a set
of \emph{procedure (symbols)}.
The set of commands $\Cmd \app V$ over program variables $\vx \in V$ is given by %the following grammar
\begin{bnf}
  \cmd,\cmdtwo &
  \cskip
  \bnfmid \vx \sample \dexpr
  \bnfmid \vx \sample \fn(\expr_1,\dots,\expr_{\ar{\fn}})
  \bnfmid \cret{\expr}
  \bnfmid \clet{\vx}{\expr}{\cmd}
  & \\
  & \cmd \sep \cmdtwo
  \bnfmid \cif{\bexpr}{\cmd}{\cmd}
  \bnfmid \cwhile{\bexpr}{\cmd}
  \bnfmid \cnd{\cmd}{\cmdtwo}
\end{bnf}

The interpretation of these commands is fairly standard.
The command $\cskip$ acts as a no-op.
In an assignment $\vx \sample \rexpr$, the right-hand side $\rexpr$ evaluates to a distribution,
from which a value is sampled and assigned to $\vx$.
As right-hand side $\rexpr$, we permit either built-in sampling expressions $\dexpr \in \SExpr$ such as
$\ccall{unif}{lo,hi}$ for sampling an integer uniformly between constants $lo$ and $hi$,
or calls to user-defined procedures~$\fn \in \Proc$. A command $\clet{\vx}{\expr}{\cmd}$
declares a local variable $\vx$, initialised by $\expr$, within command $\cmd$.
Further, commands can be defined by \emph{composition},
via a conditional \emph{conditional}, via a \emph{while-loop} construct
or via the \emph{non-deterministic} choice operator $\cnd{\cmd}{\cmdtwo}$, interpreted
in a demonic way.
For ease of presentation, we elide a probabilistic choice command and probabilistic guards (see eg~\cite{KaminskiKMO16,KKMO:ACM:18}), since they do not add to the expressiveness of the language.
In examples, however, we make use of mild syntactic sugaring to simplify
readability, as we did already above.

A \emph{program} $\prog$ is given as a finite sequence of procedure definitions.
Each procedure $\fn$ expects a number of integer-valued arguments
and returns an integer upon termination. The number of arguments of a procedure
$\fn$ is called its \emph{arity} and denoted as $\ar{\fn}$. The body of~$\fn$, denoted as $\bdy{\fn}$,  consists
of a command $\cmd \in \Cmd \app V$ that may refer both to formal parameters, locally and globally declared
variables. Note that since $\fn$ may trigger a sampling instruction, its evaluation
evolves probabilistically, yielding a final value and modifying the global state
with a certain probability.
Formally, a program is a tuple $\prog = (\GVar,\dcl)$ where
$\GVar \subseteq \Var$ is a finite set of \emph{global variables},
and where $\dcl$ is a finite sequence of \emph{procedure declarations} of the form
% GM: space
\begin{equation*}
  \cdef{\fn}{\vx_1,\dots,\vx_{\ar{\fn}}}{\cmd}
  \tpkt
\end{equation*}
%
%$\cdef{\fn}{\vx_1,\dots,\vx_{\ar{\fn}}}{\cmd}$.
To avoid notational overhead due to variable shadowing,
we assume that the \emph{formal parameters} $\params{\fn} \defsym \vx_1,\dots,\vx_{\ar{\fn}}$ and global variable
are all pairwise different, and distinct from variables locally bound within $\cmd$.
Using $\alpha$-renaming, this can always been guaranteed.
Throughout the following, we keep the program $\prog = (\GVar,\dcl)$ fixed.
\label{variable_convention}

\paragraph*{Weakest Pre-Expectations Semantics.}

A \emph{memory} (or \emph{state}) over finite variables $V \subseteq \Var$ is a mapping $\mem \in \Mem{V} \defsym V \to \Val$ from variables to integers.
We write $\GMem \defsym \Mem{\GVar}$ for the set of \emph{global memories},
and $\restrict{\mem}{V}$ for its restriction to variables in $V$.
Let $\vx \in V$ be a variable and let $\val \in \Val$.
Then we write $\mem[\vx \mapsto \val]$ for the memory that is as $\mem$
except that $\vx$ is mapped to $\val$.
As short-forms, we write $\memg{\mem}$ for the \emph{global memory} $\restrict{\mem}{\GVar}$,
and dual, $\meml{\mem}$ for the \emph{local memory} $\restrict{\mem}{\Var \setminus \GVar}$.
Let $\Rext$ denote the set of non-negative real numbers extended with $\infty$,
ie.\ $\Rext \defsym \Realpos \cup \{\infty\}$.
A (discrete) \emph{subdistribution} over $A$ is a function $\delta \colon A \to \Realpos$
so that $\sum_{a \in A} (\delta \app a) \leq 1$, and a \emph{distribution},
if $\sum_{a \in A} (\delta \app a) = 1$.
We may write (sub)distributions $\delta$ as $\prms{\delta \app a : a}_{a \in A}$.
The set of all subdistributions over $A$ is denoted by $\Distr \app A$.
We restrict to distributions over countable sets $A$.
The \emph{expectation of} a function $f \colon A \to \Rext$ wrt.\ a distribution
$\delta$ is given by $\E{\delta}{f} \defsym \sum_{a \in A} (\delta \app a) \cdot (f \app a)$.
We suppose that expressions $\expr \in \Expr \app V$,
Boolean expressions $\bexpr \in \BExpr \app V$ and
sampling expressions $\dexpr \in \SExpr \app V$
are equipped with interpretations $\sem{\expr} : \Mem{V} \to \Val$,
$\sem{\bexpr} : \Mem{V} \to \Bool$
and $\sem{\dexpr} : \Mem{V} \to \Distr\ \Val$, respectively.
Functions in $A \to \Rext$ are called \emph{expectation} (over a set~$A$) in the literature
and are usually denoted by $f,g,h$ etc.
We equip expectations and transformers with the point-wise ordering,
that is, $f \leqf g$ if $f \app a \leq g \app a$ for all $a \in A$.
We also extend functions over $A$ point-wise to expectations
and denote these extensions in \texttt{typewriter} font, eg., $f \plusf\ g \defsym \lambda a. f \app a + g \app a$ etc.
In particular, $\zerof = \lambda a. 0$ and $\inftyf = \lambda a. \infty$.
Equipped with this ordering, expectations form an \emph{$\omega$-CPO} \cite{Winskel:93}, whose least element is the constant zero function $\zerof$.

\begin{figure*}
  \caption{Expectation Transformer Semantics of \PWHILE.}
  \label{fig:et}
  \vspace{-3mm}
  \begin{center}
  \begin{alignat*}{3}
    \toprule
    \TOP
    & \mparbox{7mm}{\et[\penv]{\fn} : (\Val \to \GMem \to \Rext) \to (\Val^{\ar{\fn}} \to \GMem \to \Rext)}
    \\[1mm]
    & \et[\penv]{\fn}\app f  &&
    \defsym \lam{\vec{\val}\,\mem}{\et[\penv][f]{\bdy{\fn}}} \app (\lam{\memtwo}{f \app 0 \app \memg{\memtwo}}) \app (\mem \uplus \{\params{\fn}\mapsto \vec{\val}\})
    %&
    \\[1.5ex]
    & \mparbox{7mm}{\et[\penv][f]{\cmd} \colon (\Mem{V} \to \Rext) \to (\Mem{V} \to \Rext)}
    \\[1mm]
    & \et[\penv][f]{\cskip} \app g &&
      \defsym g
    \\
    & \et[\penv][f]{\vx \sample \dexpr} \app g &&
    \defsym \lam{\mem}{\mathbb{E}_{\sem{\dexpr} \app \mem} \app (\lam{\val}{f[\vx \mapsto \val]})}
    \\
    & \et[\penv][f]{\vx \sample \ccall{\fn}{\vec{\expr}}} \app g &&
      \defsym \lam{\mem}{\penv \app \fn \app
      (\lam{\val\,\memtwo}{g \app (\meml{\mem}\uplus \memtwo)[\vx \mapsto \val]})
      \app (\sem{\vec{\expr}} \app \mem) \app \memg{\mem}}
    \\
    & \et[\penv][f]{\cret{\expr}} \app g &&
       \defsym \lam{\mem}{f \app (\sem{\expr} \app \mem) \app \memg{\mem}}
    \\
    & \et[\penv][f]{\clet{\vx}{\expr}{\cmd}} \app g &&
       \defsym \lam{\mem} {\et[\penv][f]{\cmd} \app g  \app \mem [\vx \mapsto \sem{\expr}\app\mem] }
    \\
    & \et[\penv][f]{\cmd \sep \cmdtwo} \app g &&
       \defsym \lam{\mem}{\et[\penv][f]{\cmd} \app (\et[\penv][f]{\cmdtwo} \app g) \app \mem}
    \\
    & \et[\penv][f]{\cif{\bexpr}{\cmd}{\cmdtwo}} \app g &&
                                                           \defsym \lam{\mem}{[\sem{\bexpr} \app \mem] \cdot \et[\penv][f]{\cmd} \app g \app \mem + [\sem{\neg\bexpr} \app \mem] \cdot \et[\penv][f]{\cmdtwo}\app g \app \mem}
    \\
    & \et[\penv][f]{\cwhile{\bexpr}{\cmd}} \app g &&
       \defsym \lam{\mem}{\lfp{\lam{G}{\lam{\memtwo}{[\sem{\bexpr} \app \memtwo] \cdot \et[\penv][f]{\cmd} \app G \app \memtwo + [\sem{\neg\bexpr} \app \memtwo] \cdot g \app \memtwo}}} \app \mem}
    \\
    & \et[\penv][f]{\cnd{\cmd}{\cmdtwo}} \app g &&
       \defsym \lam{\mem}{\maxf(\et[\penv][f]{\cmd} \app g,\et[\penv][f]{\cmdtwo} \app g) \app \mem}
    \\
    \bottomrule                                                   
  \end{alignat*}
  \end{center}
\vspace{-5mm}
\end{figure*}

\emph{Expectation transformer} have the shape $F : (A \to \Rext) \to (B \to \Rext)$ (for some sets~$A$ and $B$
respectively). Ordered point-wise these again form an $\omega$-CPO and have thus enough
structure to give a denotational model to programs.
In particular, when $A=B$ the least fixed-point $\lfp{F}$ of $F$ is well-defined
and given by $\sup_n\app (F^n \app \bot_A)$, for $F^n$ the $n$-fold composition of $F$ \cite{Winskel:93}.
Following Dijkstra~\cite{Dijkstra76}, expectation transformers can be seen as giving rise to a (denotational) semantics.%
\footnote{Such a transformer constitutes a standard denotational semantic, where effects are interpreted in the continuation monad $\mathsf{Cont}_{\Rext}({-}) = (({-}) \to \Rext) \to \Rext$, cf.~\cite{ABD:2021}. To observe this, flip the arguments $f$ and $\mem$ in Figure~\ref{fig:et}.}

The transformer $\et{\prog}$ is defined in terms of an \emph{expectation transformer} $\et{\fn}$, $\fn \in \prog$, for \emph{procedures},
which in turn is mutual recursively defined via an \emph{expectation transformer} $\et{\cmd}$ on \emph{commands}, cf.~\Cref{fig:et}.
These transformers are parameterised in a \emph{procedure environment} $\penv$,
of type
\[
  \fn{:}\Proc \to (\Val\times\GMem \to \Rext) \to (\Val^{\ar{\fn}} \times \GMem \to \Rext)
  \tkom
\]
associating an expectation semantic to each $\fn \in \Proc$, which is
used in the case of procedure calls $\vx \sample \ccall{\fn}{\vec{\expr}}$.
As already alluded to, the definitions given in~\Cref{fig:et} are best understood as a denotational semantics in continuation passing style, with post-expectations $f \colon \Val \times \GMem \to \Rext$ and $g : \Mem{V} \to \Rext$ being interpreted as continuations, respectively.
%
% MA: reform
In short, the functional
\[
  \et{\fn} : (\Val \times \GMem \to \Rext) \to (\Val^{\ar{\fn}} \times \GMem \to \Rext)
  \tkom
\]
turns a continuation in the return-value
and (possibly modified within the body of $\fn$) global memory, and lifts it to a function
in the formal parameters of $\fn$ and the initial global memory.
To this end, $\et{\fn}$
links formal parameters $\params{\fn}$ to the input values $\vec{\val}$,
and yields to the transformer $\et[\penv][f]{\bdy{\fn}}$ associated to its body.
As denoted, this auxiliary transformer on commands is parameterised by a procedure environment $\penv$,
and continuation $f$ of the procedure $\fn$. The latter is necessary to model evaluation when $\bdy{\fn}$ pre-maturely leaves scope through a return statement. To capture the sitation where evaluation completes without encountering a return, the continuation to $\et[\penv][f]{\bdy{\fn}}$ supplies to $f$ a default return value of zero.

To give some intuitions on the transformer of commands, it is again advisive to think of
$\et[\penv][f]{\cmd} \app g \app \mem$ as running the command $\cmd$ on memory $\mem$, and then proceeding with continuation~$g$. The only exception to this reading lies in the treatment of sampling
instructions.

For a sampling instructions $\vx \sample \dexpr$, $\et[\penv][f]{\cmd} \app g \app \mem$ computes
the expected value of the continuation~$g$ on the distribution of stores obtained by updating $\vx$ with elements sampled from $\dexpr$. (As a simplified and rough intuition, think of the assignment rule in Hoare logics.)
For a procedure call $\vx \sample \ccall{\fn}{\vec{\expr}}$,
we make use of the semantics $\penv \app \fn$ of $\fn$,
which is applied to the evaluated arguments and the current global memory $\memg{\mem}$.
The continuation passed to $\penv \app \fn$ runs the continuation $g$ on the combination of global
memory $\memtwo$ yielded by $\fn$ and the local pre-memory $\meml{\mem}$, with $\vx$ updated by the return value yielded by $\fn$. 
When $\cmd$ is a return statement, the transformer skips continuation~$g$ and jumps
directly to the continuation $f$ defined by the enscoping procedure, supplying the returned value.
For a local variable declarations, $\et[\penv][f]{\clet{\vx}{\expr}{\cmd}} \app g$ implement lexical scope, updating variable $\vx$ in $\mem$ by $\expr$.
Due to the variable convention---as emphasised on page~\pageref{variable_convention}---%
the \emph{local} variable $\vx$ is fresh and thus cannot conflict with any variable in $V$.

\begin{figure*}
\caption{Expectation Transformer Laws.}
\label{fig:et-idents}  
  \begin{tblr}{colspec={X[-1,l,mode=math]X[mode=math,l,m]},rowsep={3pt}}
    \toprule
    \TOP
    \law[idents:mono]{monotonicity}
    & \penv \leq \penvtwo \land f_1 \leqf f_2 \land g_1 \leq g_2 \implies \et[\penv][f_1]{\cmd} \app g_1 \leq \et[\penvtwo][f_2]{\cmd} \app g_2 \\[1mm]
    \law[idents:linearity]{linearity}
    & \et[\penv][\sum_i  \coeff_i \cdot f_i]{\cmd} \app (\sum_i \coeff_i \cdot g_i) = \sum_i \coeff_i \cdot \et[\penv][f_i]{\cmd} \app g_i
    \\
    \law[idents:ui-loop]{loop-invariant}
    & [\neg \bexpr] \cdot g_1 \leqf g_2 \land [\bexpr] \cdot \et[\penv][f]{\cmd} \app g_2 \leq g_2 \implies \et[\penv][f]{\cwhile{\bexpr}{\cmd}} \app g_1 \leq g_2 \\
    \BOT \law[idents:ui-proc]{procedure-invariant}
    & \forall \fn \in \Proc.\ \et[\penvtwo][f]{\fn} \leqf \penvtwo \app \fn \implies \progenv \leq \penvtwo  \\
    \bottomrule
  \end{tblr}
\vspace{-3mm}  
\end{figure*}

The transformer for composed commands is given by the composition of the corresponding transformers.
In the case of conditionals, we use Iverson bracket $\brac{\cdot}$ to interpret Boolean values $\bfalse$ and $\btrue$ as integers $0$ and $1$, respectively. Thus the transformer of conditionals effectively
recurses on one of the two branches of the conditional, depending on the condition $\bexpr$.
The expectation transformer for loops can be seen as (the least fixed-point) satisfying
\[
  \et[\penv][f]{\cwhile{\bexpr}{\cmd}} \app g \app \mem
  =
  \begin{cases}
    \et[\penv][f]{\cmd\sep\cwhile{\bexpr}{\cmd}} \app g \app \mem
     & \text{if $\sem{\bexpr} \app \mem = \btrue$,} \\
    g \app \mem & \text{if $\sem{\bexpr} \app \mem = \bfalse$,}
  \end{cases}
\]
running $\cmd$ once followed by the while loop in case the guard holds;
and calling the continuation $g$ otherwise.
Finally, non-determinism is modelled as the \emph{maximum} of the pre-expectations obtained
from the alternatives. This is motivated by the fact that we are interested in worst-case bounds
(on expected values and costs) and follows the treatment
of non-determinism in the context of program analysis, cf.~\cite{Nielson:1999}.
% Perhaps, our motivation for non-determinism is best understood from the viewpoint of program abstraction. To derive (worst-case) upper bounds on the expected values of eg. conditionals, we take the maximum of these values for each branch.
%
We note that our treatment of non-determinism constitutes a conceptual difference
to the \emph{weakest pre-expectation calculus} of~\cite{MM05}. There the focus is
on (quantitative) program behaviours and thus a lower-bound to the pre-expectation is sougth
which results in choosing the \emph{minimum} of the pre-expectations to handle
non-deterministic choice.

What remains is to set up the procedure environment $\penv$ according to the
declarations in $\prog$.
To this end, we associate the semantics $\et{\prog}$ with the procedure environment $\penv$
that makes each $\fn \in \Proc$ adhere to the semantics $\et[\penv]{\fn}$, that is,
that satisfies the (least) fixed point $\penv \app \fn \app f = \et[\penv]{\fn} \app f$.
More precisely, $\et{\prog} \defsym \lfp{\lam{\Xi}{\lam{\fn}{\et[\Xi]{\fn}}}}$.
Again, this least (higher-order) fixed-point exists as expectation transformers form an $\omega$-CPO.

We emphasise that all the individual transformers are defined mutually, thus permitting mutual recursion on procedure declarations.
In the following, we write $\et{\cdot}$ instead of $\et[\penv][f]{\cdot}$
when $\penv = \et{\prog}$ and caller expectation $f$ is clear from context.
% GM: space
% Table~\ref{fig:et-idents} lists laws, tacitly employed above, for the expectation transformers $\et{\fn}$ and $\et{\cmd}$, respectively. 

\begin{proposition*}[Expectation Tranformer Laws]{p:laws}
  For any program $\prog$, any procedure environment $\penv$, any command $\cmd$ and any expectations
  $f$, $f_1$, $f_2$, $g$, $g_1$, $g_2$, \dots the laws in Figure~\ref{fig:et-idents} hold.
\end{proposition*}

\section{Inference}
\label{Inference}

In this section we present the key contribution of this work, the development of a \emph{term representation} of the
expectation transformer $\et{\prog}$, see~Figure~\ref{fig:ET}. This 
forms the crucial basis of the automated inference of upper bounds to~$\et{\prog}$, as
implemented in our prototype~\evimp.

To a great extent, the definition of $\ET{\fn}$, $\fn$ a procedure, follows the pattern of the definition of $\et{\fn}$.
Notably, however, it differs in the definition of procedure calls, where we employ the templates $\soa{\termh{\fn}}{\termk{\fn}}$
outlined in Section~\ref{Overview} and the definition of loops, where we replace the fixed-point construction via a suitably constrained template.
Theorem~\ref{t:soundness} verifies that this approximation is sound; in proof we make essential use of the invariant laws depicted in Figure~\ref{fig:et-idents}.
We represent expectations syntactically as terms, denoting linear combinations of norms:
\begin{equation*}
  \Norm{V}{Z} \ni \norm \bnfdef \bfn{\bexpr}{\expr} \quad (norms) \qquad
  \Term{V}{Z} \ni \term,\termtwo,\termthree \bnfdef \sum_i \coeff_i \cdot \norm_i \quad (terms)
\end{equation*}
$\expr \in \Expr \app (V \uplus Z)$ denotes an arbitrary expression over \emph{program} variables $V$ and
\emph{logical} variables $Z = \{\lv,\lvtwo,\dots\}$.
$\bexpr \in \BExpr\app (V \uplus Z)$ denotes a Boolean expression over program and logical variables.
Coefficients $\coeff$ denote terms yielding non-negative real numbers.

We require that for any norm $\norm$, $\expr$ is non-negative, whenever $\bexpr$ holds.
A norm abstracts an expression over a program variable as a non-negative real number.
For instance, $\max(\vx,0) = \bracf{\vx \geq 0} \cdot \vx$, or $\bracf{\vx \geq \vy} \cdot (\vx - \vy)$ gives
the distance from $\vx$ to $\vy$. For brevity, we set $\normf{\vx} \defsym \bracf{\vx \geq 0} \cdot \vx$.
Following the semantics, we set $[\bexpr] \cdot (\sum_i \coeff_i \cdot (\bfn{\bexpr_i}{\expr_i})) = \sum_i \coeff_i \cdot (\bfn{\bexpr \land \bexpr_i}{\expr_i})$.
Note that, since norms are non-negative, a term $\term \in \Term{V}{\varnothing}$ can be interpreted as an expectation
$\sem{\term} : \Mem{V} \to \Rext$ over the state-space.
Let $\dexpr \in \SExpr$ denote a sampling instruction and let $\term \in \Term{V}{\varnothing}$,
then $\ETerm{\vx}{\dexpr}{\term}$ denotes the term representation of the expectation
$\E{{\sem{\dexpr} \app \mem}}{\sem{\term[\vx \mapsto \val]}}$
of the continuation $\sem{\term[\vx \mapsto \val]}$ wrt.\ the distribution $\sem{\dexpr}$ applied to the current memory $\mem$.

\newcommand{\DGREEN}[1]{{\color{green!50!black}{#1}}}
\begin{figure*}
  \caption{Term Representation of Expectation Transformer Semantics.}
  \label{fig:ET}
  \vspace{-3mm}
  \centering
  \begin{alignat*}{3}
    \toprule
    \TOP
    &  \mparbox{7mm}{\ET[\penv]{\fn} : \Term{\GVar}{\{\lvr\}} \to \Term{\GVar}{\{\lvas{\fn}}\}}
    \\[1mm]
    & \ET[\penv]{\fn}\app \termtwo &&
    \defsym (\ET[\penv][\termtwo]{\bdy{\fn}} \app \termtwo [\lvr \mapsto 0]) [\params{\fn} \mapsto \lvas{\fn}]
    &
    \\[1.5ex]
    &  \mparbox{7mm}{\ET[\penv][\termtwo]{\cmd} : \Term{V}{\{\lvs[\fn]\}} \to \Term{V}{\{\lvs[\fn]\}}}
    \\[1mm]
    & \ET[\penv][\termtwo]{\cskip} \app \term && \defsym \term
    \\
    & \ET[\penv][\termtwo]{\vx \sample \dexpr} \app \term &&
      \defsym \ETerm{\vx}{\dexpr}{\term}
    \\
    & \ET[\penv][\termtwo]{\vx \sample \ccall{\fntwo}{\vec{\expr}}} \app \term
      && \defsym \termh{\fntwo}[\vec{\lva}_{\fntwo} \mapsto \vec{\expr}, {\lvs[\fntwo]} \mapsto \vec{\termthree}]
      \text{ \emph{where} } \vec{\termthree} \in \Term{(V \setminus \GVar)}{\{\lvs[\fn]\}}
    \\
    &&&& \mparbox[r]{0mm}{
         \DGREEN{\ctx[\fntwo] \vdash \ctx[\fntwo] [{\lvs[\fntwo]} \mapsto \vec{\termthree}]}; \quad 
         \DGREEN{\ctx[\fntwo] \vdash \term[\vx \mapsto \lvr] \leq \termk{\fntwo}[{\lvs[\fntwo]} \mapsto \vec{\termthree}]}}
    \\
    & \ET[\penv][\termtwo]{\cret{\expr}} \app \term && \defsym \termtwo[\lvr \mapsto \expr]
    \\      
    & \ET[\penv][\termtwo]{\clet{\vx}{\expr}{\cmd}} \app \term
    && \defsym (\ET[\penv][\termtwo]{\cmd} \app \term)[x \mapsto \expr]
    \\    
    & \ET[\penv][\termtwo]{\cmd \sep \cmdtwo} \app \term
    && \defsym \ET[\aenv][\termtwo]{\cmd} \app (\ET[\aenv][\termtwo]{\cmdtwo} \app \term)
    \\
    & \ET[\penv][\termtwo]{\cif{\bexpr}{\cmd}{\cmdtwo}} \app \term
    && \defsym [\bexpr] \cdot \ET[\aenv][\termtwo]{\cmd} \app \term + [\neg\bexpr] \cdot \ET[\aenv][\termtwo]{\cmdtwo}\app \term
    \\
    & \ET[\penv][\termtwo]{\cwhile{\bexpr}{\cmd}} \app \term
      && \defsym \termthree
    & \mparbox[r]{0mm}{\DGREEN{\bexpr \vdash \ET[\aenv][\termtwo]{\cmd} \app \termthree \leq \termthree}; \quad \DGREEN{\neg\bexpr \vdash \term \leq \termthree}}\\
    & \ET[\penv][\termtwo]{\cnd{\cmd}{\cmdtwo}} \app \term
      && \defsym \termthree
                                   & \mparbox[r]{0mm}{\DGREEN{\vdash \ET[\aenv][\termtwo]{\cmd} \app \term \leq \termthree}; \quad \DGREEN{\vdash \ET[\aenv][\termtwo]{\cmdtwo} \app \term \leq \termthree}}
    \\
  \bottomrule
  \end{alignat*}
\vspace{-5mm}
\end{figure*}

The definition of $\ET{\cdot}$ (see~\Cref{fig:ET}) is best understand as a syntactic representation of the
expectation transformer $\et{\cdot}$. It translates this denotational semantics into first-order
constrains, susceptible to automation.
Apart from returning a term, representing a pre-expectation,
it generates a set of \emph{side-conditions} of the form
\DGREEN{$\ctx \vdash \termtwo \leq \term$}. Such a constraint is \emph{valid}, if for all
logical variables $\vec{\lv}$ occurring in the expressions $\bexpr$, $\termtwo$ and $\term$, respectively, we have 
$\sem{\bexpr[\vec{\lv} \ass \vals]} \entails \sem{\termtwo[\vec{\lv} \ass \vals]} \leq \sem{\term[\vec{\lv} \ass \vals]}$ for all $\vals$. In this reformulation, the procedure environment $\penv$ is kept implicit.
The semantics of each $\fn \in \Proc$ is thus representable as a pair of terms
$\termk{\fn} \in \Term{\GVar \uplus \{\lvr\}}{Z}$,
$\soa{\termh{\fn}}{\termk{\fn}}$, with $\termh{\fn} \in \Term{\GVar \uplus \{\lvas{\fn}\}}{Z}$ and
where $\lvas{\fn} = \lva_1,\dots,\lva_{\ar{\fn}}$ and $\lvr$ are dedicated variables,
referring to the formal parameters and the return value of $\fn$.

For each $\fn$, the terms $\termh{\fn}$ and $\termk{\fn}$ can be understood as families of terms, parameterised in
the substitution of logical variables. To wit, let $\theta$ denote an arbitrary substitution of
logical variables for values $\vals$, then $\sem{\termh{\fn}\theta}$ and $\sem{\termk{\fn}\theta}$ denote
pre- and post-expectations $h_{\vals} \colon \Val^n \times \GMem \to \Rext$
and $k_{\vals} \colon \Val \times \GMem \to \Rext$, respectively. Conceptually, the logical variables
model \emph{resource parametricity}~\cite{Hoffmann11}, which is required to model the call-stack in recursive
calls suitably. 
To improve upon the expressiveness of templates, we require that the terms
$\termh{\fn}$ and $\termk{\fn}$ are non-negative only under an
\emph{associated constraint} $\ctx[\fn]$ on logical variables.

For instance, in \Cref{Overview}, we implicitly used the logical context $\ctx[\pwhile!balls!] = (0 \leq \lv)$
to ensure that templates $\termh{\pwhile!balls!} = c_{0} + c_{1} \cdot \normf{\lva} + c_{2} \cdot \lv$
and $\termk{\pwhile!balls!} = \lvr + \lv$ are non-negative.
To ensure that all the pairs $\soa{\termh{\fn}}{\termk{\fn}}_{\fn \in \Prog}$ adhere to the semantics of $\prog$,
wrt.\ to their associated contexts $\ctx[\fn]$, we finally require (for all $\fn \in \prog$)
\begin{equation}
  \label{eq:main-constraints}
  \DGREEN{\ctx[\fn] \vdash (\ET{\fn} \app \termk{\fn})[\params{\fn} \mapsto \vec{\varg}] \leqf \termh{\fn} \tpkt}
\end{equation}
The left-hand side of the inequality in~\eqref{eq:main-constraints} may reference the pair $\soa{\termh{\fn}}{\termk{\fn}}$,
namely when $\fn$ calls itself recursively. As we have already seen in the informal description in Section~\ref{Overview} this
recourse may be performed for an \emph{instantiation} $[\lvs[\fn] \mapsto \vec{\termthree}]$ of logical
variables. In Section~\ref{Implementation} we detail how this instance is chosen.
Under the intended meaning of $\soa{\termh{\fn}}{\termk{\fn}}$,
any application of $\soa{\termh{\fn}}{\termk{\fn}}$ to an expectation $\term$
(an alternation of $\termk{\fn}$)
can safely be replaced by $\termh{\fn}$, provided the passed expectation $\term$
is bounded again from above by $\termk{\fn}$, viz, the corresponding
constraint $\ctx[\fntwo] \vdash \term[\vx \mapsto \lvr] \leq \termk{\fntwo}[{\lvs[\fntwo]} \mapsto \vec{\termthree}]$
in Figure~\ref{fig:ET}.

\begin{theorem*}[Soundness Theorem]{t:soundness}
  If for all $\fn \in \prog$ the constraint~\eqref{eq:main-constraints} is fulfilled and all side-conditions in \Cref{fig:ET} are met, then for all $\fn \in \prog$, $\et{\fn} \leqf \sem{\ET{\fn}}$,
  that is, the inference algorithm is sound.
\end{theorem*}

\begin{figure*}
\caption{Studies in Expected Value Analysis}
\label{fig:motivation2}  
\smallskip
\centering
\begin{subfigure}[b]{0.37\textwidth}
\begin{lstlisting}[style=pwhile,style=framed,mathescape]
def biased_coin($x_{1}$,$x_{2}$):
if ($\ast$) {
 if (Bernoulli($\sfrac{1}{2}$)) {
   $x_1 \ass 2 \cdot  x_1$; $x_{2} \ass 2 \cdot x_{2}$;
   if ($x_{2}+1 \leq x_1$) {$x_{2} \ass x_{2}+1$}
   else {
     if ($x_{2} + \sfrac{1}{2} \leq x_1$) {$x_1 \ass 1$; $x_{2} \ass 0$}
     else {$x_1 \ass 0$; $x_{2} \ass 0$}}}};
 return $x_1$
\end{lstlisting}
\caption{Generating a biased coin~\cite{WangHR18}.}
\label{lst:biased}
\end{subfigure}
\hspace{1ex}\hfill
\begin{subfigure}[b]{0.25\textwidth}
\begin{lstlisting}[style=pwhile,style=framed,mathescape]
def binomial_update(N):
  var x $\ass$ 0;
  var n $\ass$ 0;
  while (n < N) {
    x $\ass$ x + Bernoulli($\sfrac{1}{2}$);
    n $\ass$ n + 1
  };
  return x

\end{lstlisting}
\caption{Binomial update~\cite{KatoenMMM10}.}
\label{lst:binomial}
\end{subfigure}
\hspace{1ex}\hfill
\begin{subfigure}[b]{0.32\textwidth}
\begin{lstlisting}[style=pwhile,style=framed,mathescape]
def hire(n):
   var hires $\ass$ 0;
   if (n > 0) {
     hires $\ass$ hire(n-1);
     hires $\ass$ hires+Bernoulli($\sfrac{1}{n}$)
   };
   return hires


\end{lstlisting}
\caption{Hire a new assistant~\cite{Cormen:2009}.}
\label{lst:hire}
\end{subfigure}
\label{fig:inference}
\vspace{-3mm}
\end{figure*}

% GM: space
In the remainder of the section, we detail the definition of the
term representation~\Cref{fig:ET}, in particular on the
delineated constraints. To this avail, we consider its working wrt.\ three prototypical benchmark
examples, depicted in Figure~\ref{fig:motivation2}.

\paragraph{Templating approach.}

In the inference of upper invariants, we follow the \emph{templating approach} standard in the literature
(see eg.~\cite{AMS20,WangKH20,LMZ:2022}) in which the functions to be synthesised---in our case a
closed form of $\ET{\prog}$---are given as linear combination $\sum_i c_i \cdot b_i$ of pre-determined
\emph{base functions}~$b_i$, with variable coefficient $c_i$.
We emphasise that base functions can be \emph{linear} or \emph{non-linear}.
Concretely, for straight-line commands $\cmd$ this is captured by the definition of
the term representation $\ET{\cmd}$ in the context of the 
continuation $\term$. In Section~\ref{Overview} we have seen an informal account of this
recipe. As a slightly more involved example, consider procedure \pwhile!biased_coin! depicted in Listing~\ref{fig:motivation2}(\subref{lst:biased}).%
\footnote{%
  Listing~\ref{fig:motivation2}(\subref{lst:biased}) is taken from Wang et al.~\cite{WangHR18} and---making use
  of a non-deterministic abstraction---constitutes a
  variant of an example considered by Katoen et al.~\cite{KatoenMMM10}.
  The latter example uses a stream of fair coin flips to generate a (single) biased coin.}
The procedure incorporates a \emph{non-deterministic choice}, denoted
by the conditional with unspecified guard $(\ast)$. 

We focus on the \emph{non-deterministic choice} and the \emph{sampling instruction}
incorporated.
Wrt.~$\cnd{\cmd}{\cmdtwo}$ the term representation
eludes an explicit representation of the maximum function $\maxf$. Instead non-determinism
is resolved by asserting the constraints
\begin{inparaenum}[(i)]
\item \DGREEN{$\vdash \ET[\aenv][\termtwo]{\cmd} \app \term \leq \termthree$} and
\item \DGREEN{$\vdash \ET[\aenv][\termtwo]{\cmdtwo} \app \term \leq \termthree$}, respectively.
\end{inparaenum}
Wrt.\ sampling instructions, $\ET[\penv][\termtwo]{\vx \sample \dexpr} \app \term$ is defined as the term
$\ETerm{\vx}{\dexpr}{\term}$, representing the computation of the expected value
of the continuation $\term$ symbolically on the distribution of the memory obtained
by sampling elements according to the instruction $\dexpr$.

To illustrate, let $\cmd$ denote the body of the procedure \pwhile!biased_coin!. By default, we choose the return value $x_{1}$ as continuation, which we abstracted by the norm $\normf{x_{1}}$.
Restricting to the non-trivial branch of the non-deterministic choice, we obtain by
symbolic calculation that
$\ET{\cmd} \app \normf{x_1} = \normf{x_1} + \bracf{x_1 \geq  \sfrac{1}{2} + x_2}$.
Solely by symbolic execution, w obtain the constraint
% $\entails \sfrac{1}{2} \cdot 2 \cdot \normf{x_1} + \bracf{x_1 \geq  \sfrac{1}{2} + x_2} \cdot \onef \leqf
% \sfrac{1}{2} \cdot c_0 \cdot \bracf{x_1 \geq  \sfrac{1}{2} + x_2} \cdot 2 \cdot \normf{x_1}  +
% \sfrac{1}{2} \cdot c_1 \cdot \bracf{x_2 \geq x_1} \cdot 2 \cdot \normf{x_1} +
% \sfrac{1}{2} \cdot c_2 \cdot \bracf{x_1 \geq  \sfrac{1}{2} + x_2} \cdot \onef$, which is
%
\begin{align*}
  \entails \normf{x_1} + \bracf{x_1 \geq  \sfrac{1}{2} + x_2}
  & \leqf c_0 \cdot \bracf{x_1 \geq  \sfrac{1}{2} + x_2} \cdot \normf{x_1}
  + c_1 \cdot \bracf{x_2 \geq x_1} \cdot \normf{x_1} + \sfrac{1}{2} \cdot c_2 \cdot \bracf{x_1 \geq  \sfrac{1}{2} + x_2}
  % + c_3 \cdot \normf{x_1}
    \tkom
\end{align*}
solvable with $c_0 = c_1 = c_2 = 1$, yielding $\normf{x_1} + \bracf{x_1 \geq  \sfrac{1}{2} + x_2}$
as the desired expected return value.

\paragraph{Loop programs.}

Considering loops, we employ the
\emph{loop-invariant law} to derive a closed form, cf.~Figure~\ref{fig:et-idents}.
Conclusively, for a loop statement $\cwhile{\bexpr}{\cmd}$, $\ET[\penv][\termtwo]{\cwhile{\bexpr}{\cmd}} \app \term$
asserts the constraints
\begin{inparaenum}[(i)]
  \item \DGREEN{$\bexpr \vdash \ET[\aenv][\termtwo]{\cmd} \app \termthree \leqf \termthree$} and
  \item \DGREEN{$\neg\bexpr \vdash \term \leqf \termthree$}, respectively.
\end{inparaenum}
Ie.\ $\termthree$ represents an upper bound $I_{\term}$, parameterised in the
post-expectation $\term$. Due to its reminiscence with a loops invariant, the
function~$I_{\term}$ is called an \emph{upper invariant} in the literature \cite{KKMO:ACM:18}.

To illustrate, consider the procedure \pwhile!binomial_update! from
Figure~\ref{fig:motivation2}(\subref{lst:binomial}).%
\footnote{The code sets variable $x$ to a value between $0$ and $N$, following
a binomial distribution, cf.~\cite{KatoenMMM10}.}
Again we approximate the return value by the norm $\normf{x}$. The above constraints
on the term $\termthree$ induce the following constraints on an upper invariant $I_{\normf{x}}$
\begin{inparaenum}[(i)]
\item $N \leq n \entails \normf{x} \leqf I_{\normf{x}}$ and %MA: normf{x} links
\item $n < N \entails \sfrac{1}{2} \cdot \et{x \sample x + 1; n \sample n +1} \app I_{\normf{x}} \leqf I_{\normf{x}}$,
\end{inparaenum}
respectively. Making use of linear template based on base functions $\onef$, $\normf{x}$
and $\normf{N-n}$ this can be instantiated as the following constraints.
\begin{align*}
  N \leq n & \entails \normf{x} \leqf c_0 + c_1 \cdot \normf{N-n} + c_2 \cdot \normf{x}
  \\
  n < N & \entails c_0 + c_1 \cdot \normf{N-(n+1)} + \sfrac{1}{2} \cdot c_2 \cdot \bigl(\normf{x} +  \normf{x+1}\bigr)
          \leqf c_0  + c_1 \cdot \normf{N-n} + c_2 \cdot \normf{x}
          \tkom
\end{align*}
solvable with $c_0 = c_1 = 0$ and $c_2 = \sfrac{1}{2}$, yielding the \emph{quantitative invariant} $\sfrac{1}{2} \cdot x$, that is, the derived upper invariant is optimal, cf.~\cite{KatoenMMM10}.

\paragraph{Recursive procedures.}

Since we represent $\et{\fn}$ through the term representation $\ET{\fn}$, a similar
approach can be suited to recursively defined procedures $\cdef{\fn}{\vxs}{\bdy{\fn}}$. Instead
of the \emph{loop-invariant law}, we implicitly employ the law on \emph{procedure-invariants} to derive a closed form,
though (see Figure~\ref{fig:et-idents}).
% In this way, the fixed-point equation defining $\et{\fn}$ becomes expressible as~\eqref{eq:main-constraints}. (Compare
% to the informal specialised constraint \eqref{eq:mainconstr} above.)
Thus, the definition of $\ET[\penv][\termtwo]{\vx \sample \ccall{\fntwo}{\vec{\expr}}} \app \term$ asserts
the constraints
\begin{inparaenum}[(i)]
\item \DGREEN{$\ctx[\fntwo] \vdash \ctx[\fntwo] [{\lvs[\fntwo]} \mapsto \vec{\termthree}]$} and
\item \DGREEN{$\ctx[\fntwo] \vdash \term[\vx \mapsto \lvr] \leqf \termk{\fntwo}[{\lvs[\fntwo]} \mapsto \vec{\termthree}]$},
  respectively.
\end{inparaenum}
Here, the latter constraint guarantees that the continuation $\term[\vx \mapsto \lvr]$ of the procedure
call is bounded by an instance of the bounding function $\termk{\fntwo}$. On the other hand
the first constraint guarantees that the substitution $[{\lvs[\fntwo]} \mapsto \vec{\termthree}]$ of logical variables
employed is properly represented in the context information.
Consider procedure \pwhile!hire! depicted in Listing~\ref{fig:motivation2}(\subref{lst:hire}).%
\footnote{The procedure represents a hiring process, encoded as probabilistic program,
  cf.~\cite{Cormen:2009}.}
Following the recipe employed for the procedure \pwhile!balls!, we generate the templates
\begin{inparaenum}[(i)]
\item $\termk{\pwhile!hire!} \defsym \normf{\lvr} + \lv$ and
\item $\termh{\pwhile!hire!} \defsym c_{0} + c_{1} \cdot \normf{\lvar{n}} + c_{2} \cdot \lv$
\end{inparaenum}
where $\lvar{n}$ refers to the parameter taken by \pwhile!hire!, and $\lv$ is a logical
variable subject to the constraint $\ctx[\pwhile!hire!] \defsym (0 \leq \lv)$.
This results in the following three constraints
\begin{align*}
  0 < \lvar{n},\ctx[\pwhile!hire!] & \entails \sfrac{\normf{1 + \lvr}}{\lvar{n}} + (1 - \sfrac{1}{\lvar{n}}){\normf{\lvr}} + \lv \leqf \normf{\lvr} + (d_{0} + d_{1} \cdot \lv) %\label{hire-c1}
  \\
  \ctx[\pwhile!hire!] & \entails 0 \leqf d_{0} + d_{1} \cdot \lv %\label{hire-c2}
  \\
  \ctx[\pwhile!hire!] &  \entails \bracf{0 < \lvar{n}} \cdot (c_{0} + c_{1} \cdot \normf{\lvar{n}-1} + c_{2}\cdot(d_{0} + d_{1} \cdot \lv)) + \bracf{\lvar{n} \leq 0} \cdot \lv \leqf c_{0} + c_{1}\cdot\normf{\lvar{n}} + c_{2}\cdot \lv %\label{hire-c3}
    \tkom
\end{align*}
solvable with $c_0 = 0$, $c_1 = c_2 = d_0 = d_1 = 1$.
Here, the additional premise $0 < \lvar{n}$
in the first constraint stems from a (forward) analysis of the conditional governing the call to \pwhile!hire!.
This yields $\normf{\lvar{n}}$ as (sub-optimal) upper bound to expected number of hires in \pwhile!hire!.
Note that the expected value for the number of expected hires is given
as harmonic number $H_k \in \Theta(\log(n))$. So linear is the best we can do with the templates provided in our prototype implementation~\evimp.

\section{Automation}
\label{Implementation}

We have implemented the outlined procedures, proven sound in \Cref{t:soundness}, in a prototype implementation, dubbed \evimp.%
\footnote{Our prototype implementation employs libraries of the open-source tool,
  freely available at \url{https://gitlab.inria.fr/mavanzin/ecoimp}.
  In particular, we rely on its auxiliary functionality,
  such as program parsers etc., but also on its underlying constraint solver.}
Our tool \evimp\ estimates upper-bounds on the expected (normalised) return value of procedures $\fn$,
as a function of the inputs, that is, \evimp\ computes an upper-bound to $\et{\fn} \app (\lam{v \_}{\normf{v}})$.
We note that the restriction to measurements of the expected return value does not constitute a real restriction but rather constitutes a slight design simplification, as long as the estimated continuation $f$ is representable as a return expression~$\expr$.
The term representation from \Cref{fig:ET} forms the basis of our prototype's implementation~\evimp.
Below, we highlight the main design choices that lead us from the term representation to a concrete algorithm
and provide ample experimental evidence of the algorithmic
expressivity of our prototype implementation.

\paragraph*{Assignments.}

Our implementation supports sampling from \emph{finite}, \emph{discrete} distributions $\dexpr$,
assigning probabilities $p_{i}$ to values $e_{i}$ ($0 \leqslant i \leqslant k$ for constant $k$), where probabilities  $p_{i}$
are expressed as rationals and values $e_{i}$ as integer expressions.
Both, probabilities and expressions can possibly depend on the current memory,
thus our tool natively supports \emph{dynamic sampling instructions} such as in
\pwhile!Bernoulli($\sfrac{1}{n}$)!,
used for example in our rendering of the textbook example Listing~\ref{fig:motivation2}(\subref{lst:hire}).
Note that the probabilities depend on the value of $n$ taken in the memory under which the distribution is evaluated.
Such dynamic distributions are also required to represent our variant of the Coupon Collector's problem---procedure \pwhile!every!, cf.~Listing~\ref{fig:motivation}(\subref{lst:every}).
Overall, \evimp\ natively implements in this way a variety of standard distributions, in particular it supports sampling
\emph{uniform distributions with bounded support}, \emph{Bernoulli},
\emph{binomial} and \emph{hypergeometric} distributions. Note, that through recursion more involved
distributions can be build, with unbounded and dynamic support.
Practical extensions, eg. further support for dynamic sampling are (in our opinion) engineering tasks that do not require novel theoretical insights.

\paragraph*{Template selection and instantiation.}

The overall approach rests on templating to over-approximate the behaviour of procedures and loops.
As indicated earlier, we describe these templates via linear combinations $\sum_i c_i \cdot \norm_{i}$ of pre-determined
\emph{norms}~$\norm_{i}$ and undetermined coefficients $c_i$.
To determine these templates, our implementation selects a set of candidate \emph{base functions} from post-expectations, program invariants (determined through a simple forward-analysis)
and loop-guards,
loosely following the heuristics of \citet{SZV:2016} and \citet{AMS20}. Specifically, an established invariant or guard $x \leq y$
gives rise to the base function $\normf{x - y}$, modelling the (non-negative) distance between $x$ and $y$. This
heuristic is extended from variables to expressions, and to arbitrary Boolean formulas, and turns out to work well in practice.
These base functions are then combined to an overall \emph{linear}, or \emph{simple-mixed} template, cf.~\citet{contejean:2005}. 
of base functions. For instance, $c_{1} \normf{x} + c_{2} \normf{y} + c_{3} x^{2} + c_{4} y^{2} + c_{5} [x \geq 0 \land y \geq 0](x\cdot y) + c_{6}$
is a simple-mixed template over base functions $\normf{x}$ and $\normf{y}$.
In a similar spirit, we make use of templates for the instantiations $\vec{\termthree} \in \Term{(V \setminus \GVar)}{\{\lvs[\fn]\}}$,
as linear functions in the local and logical variables, in the same vain as we have already done when presenting examples.
Our prototype implements caching and backtracking to test different templates when operating in a modular setting (see the paragraph on modularity below). In particular, non-linear base functions are employed only if the linear ones fail. Here, we follow similar approaches in the literature, cf.~\cite{WangHR18,AMS20,WangKH20,LMZ:2021,LMZ:2022}.

\paragraph*{Constraint solving.}

Evaluation of $\ET{\fn}$ for a procedure $\fn$ results in a set of constraints, whose solution is then used to assess bounding functions.
In effect, these constraints are inequalities over real-valued polynomial expressions over unknown coefficients, enriched with conditionals (through Iverson's bracket). We emphasise that via our definition of $\ET{\cdot}$, the computational very complicated problem of computing the pre-expectation $\et{\fn} \app f$ for a given continuation $f$ (finding solutions to second-order fixed point equations to deal with general recursion) has been transformed into a problem suitable to be expressed in constraints, susceptible to automation. In particular this permit us to use the constraint
solver implemented within \citet{AMS20} tool \ecoimp\ to reason about such constraints.
In brief, the solver resolves conditionals through case analysis, resulting in a set of equivalent constraints over unconditional polynomials,
and then makes essential use of Handelman's theorem~\citep{Handelman88} to turn these into constraints over undetermined coefficients.
This in turn enables the use of off-the-shelf SMT solver supporting \texttt{QF\_NRA}, in our case Z3.

\paragraph*{Improving upon modularity of the analysis.}

As in many denotational semantics, the expectation transformer $\et{\cdot}$ is compositional, which facilitates reasoning.
Unfortunately, this compositionality does not directly give rise to modularity of inference. Indeed, in the
context of our expected result value analysis, such a modular inference is unsound in general.
Recursive procedures and loops cause cyclic dependencies that hinder modularity.
In our setting, these cyclic dependencies are reflected within the constraints generated by $\ET{\cdot}$.
Templates assigned to nested loops, or potentially mutually recursive procedures, are thus defined through cyclic constraints.
As a result, constraints cannot be solved in isolation, effectively rendering the approach described so far a whole program
analysis.

In some cases, however, stratification present in the program can be exploited to run our machinery in an iterative way,
thereby greatly improving upon modularity, and in consequence improving upon the performance of the overall implementation.
% One case where our implementation \evimp\ exploits stratification lies in the analysis of calls to auxiliary procedures.
% Concretely, within the implementation, $\ET[\penv][\termtwo]{\vx \sample \ccall{\fntwo}{\vec{\expr}}} \app \term$ is specialised
% when $\fntwo$ does not call back to the analysed procedure $\fn$. In this case, \evimp\ avoids relying on the template assigned to $\fntwo$,
% but determines a (term representation of) an upper-bound to the expectation of $\term$ wrt. $\fntwo$ in isolation,
% through applying the complete machinery recursively. Crucially, the auxiliary function $\fntwo$ is analysed in full isolation.
%
To illustrate, eg.\ one case where our implementation departs from the definition of $\ET{\cdot}$ lies within the analysis of nested loops.
%This aims to improve upon scalability via a modular analysis.
Here our implementation follows ideas originally proposed by~\citet{AMS20}.
In particular, pre-expectations of nested loops can be determined in isolation.
In essence, this modular treatment of loops depends on the linear shape of templates, and the linearity of expectations law (see~\Cref{fig:et-idents}).
While the approach can in general not be lifted to our setting with procedure calls, for a sizeable class of loops---for example those that do not contain recursive calls---the approach extends to our setting.
In this specialised case the constraints imposed by the treatment of the loop
are free of unknowns expect those mentioned in the template $\termthree$ over-approximating the expectation of the loop.
Thus, the added two constraints can be solved independently, and consequently,
a concrete upper-bound rather than a template term $\termthree$ can be substituted for $\ET[\penv][\termtwo]{\cwhile{\bexpr}{\cmd}} \app \term$.
Soundness of this specialisation follows by a straightforward adaption of~\cite[Theorem~7.5]{AMS20}.
These efforts yield that our prototype implementation can analyse the entirety of our $53$ benchmarks examples in around 5~minutes on a standard desktop.
Accepting a slight deterioration of strength (loosing one example), this benchmark can be handled in less than 5~seconds.

\paragraph*{Evaluation.}

\newcommand{\fail}{---}
\newcommand{\notsupported}{\textsf{not supported}}
\newcommand{\tdiff}[2]{\ifthenelse{\equal{#2}{}}{---}{(\fpeval{round(#2/#1,0,0)})}}
\newcommand{\experiment}[7]{#1 & #2 & #4 & #6 & \tm{#3} & \tm{#5} & \tdiff{#3}{#5} & \tm{#7} & \tdiff{#3}{#7}\\}
\newcommand{\tm}[1]{\ifthenelse{\equal{#1}{}}{}{$#1$}}
\newcommand{\benchmark}[1]{
  \multicolumn{3}{@{\,}l}{\textbf{#1}}\\
  \midrule
}
\newenvironment{experimenttable}[1]{%
    \begin{table}
      \caption{Automatically Derived Bounds on the Expected Value via our Prototype~\evimp.}\label{#1}\vspace{-3mm}
      \footnotesize
      \begin{tabular}{@{\,}>{\ttfamily}l @{\hspace{1ex}} l @{\hspace{1ex}} l @{\hspace{1ex}} c @{}}
        \toprule
        \textnormal{\textbf{Program}} & \mparbox{9ex}{\textbf{Return Value}} & \multicolumn{1}{c}{\textbf{Inferred Bound}} & \textbf{Time}
        \\
        \phantom{uniform-dist-100}
        & \phantom{$number$}
                                                         & \phantom{$10 \cdot \normf{price + 1} + 5 \cdot \normf{price -1} + [price \geq 1] \cdot \sfrac{25}{6} \cdot (price^2 - 2 \cdot price + 1)$} & \textbf{(sec)}
        \\%        
        }{%
        \bottomrule
      \end{tabular}
    \end{table}
}

To the best our knowledge there is currently no tool providing a \emph{fully automated} expected value
analysis of programs available, regardless whether the considered programming language is
imperative or not, admits recursive programs or not.
On the other hand, there is ample work on (automated) generation of \emph{quantitative invariants}, see eg.~\cite{MM05,KatoenMMM10,ChakarovS14,WangHR18,BaoTPHR:2022}, employing a variety of techniques. Furthermore, there is a large body work on \emph{expected cost analysis},
see eg.~\cite{KKMO:ACM:18,NgoCH18,WFGCQS:PLDI:19,AMS20}. Thus we have chosen examples from these seminal works
as basis of the developed benchmark suite. In addition, we have added a number of examples of our own, detailed
below. In sum this amounts to a test-suite of $53$ examples.
The results of these evaluations are given in Tables~\ref{tab:evaluation:1} and~\ref{tab:evaluation:2}.%
\footnote{For ease of readability and comparison to related works, we have performed basic simplification on the presentations of the bounds that have not yet been incorporated into our prototype implementation~\evimp.}
In short, we can handle $\sfrac{49}{53}$ of the benchmark suite, without any recourse to user interaction.

\paragraph*{Selection and evaluation results on examples on invariant generation.}

We haven chosen challenging examples from \cite{KatoenMMM10,ChakarovS14}, detailed in benchmarks~(a) and~(b)
in Table~\ref{tab:evaluation:1}. Our tool can handle all of these examples,
with the exception of \texttt{uniform-dist}, where
we can only handle a restrictive instance (denoted as \texttt{uniform-dist-100} in the benchmark).
We also note that the expectations employed in \cite{KatoenMMM10} are more sophisticated than ours. On the
other hand only semi-automation is achieved.
Further, we consider examples from~\cite{WangHR18,BaoTPHR:2022},
establishing fully automated methodologies. Wang et al.\ employ a templating approach similar to ours,
while the very recent work by Bao et al.\ employs a conceptually highly interesting learning approach.%
\footnote{These artifacts are freely available; see~\url{https://dl.acm.org/do/10.1145/3211994/full/} and~\url{https://zenodo.org/record/6526004\#.Y1JMA35By5M}, respectively.}
The results are described in Table~\ref{tab:evaluation:1}~(c), (d) as well as in Table~\ref{tab:evaluation:2}~(d),
respectively. To suit to the expressivity of \evimp, we instantiate the probabilities by constant once
in the majority of the benchmarks in~(d). Variable probabilities are not (yet) expressible as upper invariants in our prototype.

We can handle all expect one example from these benchmarks. 
In the one example \texttt{eg} (benchmark~(c)) that we cannot handle the templates chosen are not precise
enough, which is explained by the more liberal construction we use to handle
non-determinism in the definition of $\ET{\cdot}$.
Wrt.\ precision, the bounds generated by~\evimp\ are often as precise as those generated by the tools from
\citet{WangHR18} and \citet{BaoTPHR:2022}. 
Wrt.\ to speed, our prototype implementation \evimp\ is typically able to handle the benchmark
examples in milliseconds. As already mentioned the whole testsuite can be handled
in around 5~minutes on a standard desktop.

\paragraph*{Selection and evaluation results on examples on expected costs.}

As argued by \citet{KKMO:ACM:18} an expected value analysis is in general unsound for expected cost analysis. However, if the program under consideration is \emph{almost-surely terminating} (\emph{AST} for
short), then an expected cost analysis can be recovered from an expected value analysis by counter instrumentation.
In this context, an expected value analysis constitutes a strict extension of an expected cost analysis. Note that all benchmarks considered in Table~\ref{tab:evaluation:2}~(f) and~(g) are AST (even \emph{positive almost-sure terminating}).

We have incorporated recursive examples from~\citet{KKMO:ACM:18}, suited to our possibilities, cf.~Table~\ref{tab:evaluation:2}~(e). While procedure \texttt{geo} can be handled instantly with \evimp, the growth rate of the
(expected) value of \texttt{faulty\_factorial} is non-polynomial. We thus suited a variant---dubbed
\texttt{faulty\_sum}---to our benchmarks replacing the multiplication by a sum, thus
featuring polynomial growth but similar algorithmic complexity.
In addition, we considered challenging---newly introduced examples from~\citet{AMS20}---if expressible in our
prototype. Wrt.\ both benchmarks, we adapted the original expected cost analysis to an expected return value analysis,
as proposed above. The evaluation results are given in Table~\ref{tab:evaluation:2}~(f) and~(g).
Remarkably, we can handle both recursive examples from Kaminski et al.\ and
also make significant inroad into challenging non-linear example from~\cite{AMS20}. Unsurprisingly,
we cannot handle all the selected benchmarks due to the greater generality of our prototype~\evimp. 

For example, the crucial motivating example by Avanzini et al.---\emph{Coupon Collector}---cannot
(yet) be handled by our prototype implementation~\evimp, as support for dynamic uniform distributions is lacking.
Our central contributions is a proof-of-concept of an automated (expected) value analysis of recursive programs.
In addition we seek general (and efficient) applicability. To this avail, we incorporated the
modular framework of~\citet{AMS20}, where possible without violating soundness.
We believe that these insights provide a sensible basis for future work. Extensions, as for example required to
handle \citet{AMS20}'s flagship example algorithmically, would require further support for dynamic sampling.
This is left for future work.
We can however express (and handle) concrete instances of this benchmark.
(We included \texttt{coupon-10}, \texttt{coupon-50} and \texttt{coupon-100}
as examples.)
Wrt.\ precision, the bounds generated by~\evimp\ are on the same order of magnitude as those generated by the tools from \ecoimp, while wrt.\ speed the generality of our tool takes its toll.

\paragraph*{Programs considered in this work.}

We have already discussed the examples \texttt{balls}, \texttt{throws}, \texttt{every-5} and
benchmark example \texttt{hire} in Section~\ref{Overview} and~\ref{Inference}, respectively.
Benchmark \texttt{every} constitutes the general case to procedure \pwhile!every!, where we do not restrict the number of bins to five, while \texttt{every-while} constitutes an encoding of this problem as loop program.
For the moment, \evimp cannot handle the general encoding of the question how many balls have to be thrown in average
until all bins are filled with at least one ball (compare~Section~\ref{Overview}) due to complexity of the example.
The remaining examples constitute recursive variants of standard examples and are given in full in the Supplementary Material.

\begin{experimenttable}{tab:evaluation:1}  
\benchmark{(a)~examples from~\citet{KatoenMMM10}}
biased-coin & $bool$ & $\sfrac{1}{2}$ & $0.029$ \\
binom-update & $x$ & $\sfrac{1}{2} \cdot \normf{N}$ & $0.012$ \\
uniform-dist & $g$ & \fail & $2.173$\\
uniform-dist-100 & $g$ & $99$ & $0.09$ 
\\[1ex]
\benchmark{(b)~examples from~\citet{ChakarovS14}}
mot-ex & $count$ & $\sfrac{44}{3}$ &  $0.02$
\\[1ex]
\benchmark{(c)~benchmark from~\citet{WangHR18}}
2d-walk & $count$ & $\normf{1+count}$ & $0.012$ \\
aggregate-rv & $x$ &
% 1/2·[1+@{A0} ≥ 0 ∧ 499 ≥ @{A1}]·(1+@{A0}) + 1/2·[499 ≥ @{A1} ∧ @{A0} ≥ 0]·@{A0} + [@{A0} ≥ 0 ∧ @{A1} ≥ 500]·@{A0}
$\sfrac{1}{2} \cdot [x \geq -1 \land 499 \geq i] \cdot (1+x) + \sfrac{1}{2} \cdot [499 \geq i] \cdot \normf{x} + [i \geq 500] \cdot \normf{x}$
& $0.009$ \\
biased-coin & $x_1$ &
% 1/2·〈2·@{A0}〉 + 1/2·[2·@{A0} ≥ 1+2·@{A1}]
$\sfrac{1}{2} \cdot \normf{x_1} + \sfrac{1}{2} \cdot [x_1 > x_2]$
& $0.014$ \\
binom-update & ${x}$ &
% 1/4·[1+@{A0} ≥ 0 ∧ 99 ≥ @{A1}]·(1+@{A0}) + 3/4·[99 ≥ @{A1} ∧ @{A0} ≥ 0]·@{A0} + [@{A0} ≥ 0 ∧ @{A1} ≥ 100]·@{A0}
$\sfrac{1}{4} \cdot [x \geq -1 \land 99 \geq n] \cdot (1+x) + \sfrac{3}{4} \cdot [99 \geq n] \cdot \normf{x} + [n \geq 100] \cdot x$
& $0.009$ \\
coupon5 & ${count}$ &
% [1+@{A0} ≥ 0 ∧ 4 ≥ @{A1}]·(1+@{A0}) + [@{A0} ≥ 0 ∧ @{A1} ≥ 5]·@{A0}
$[ count \geq -1 \land 4 \geq i] \cdot \normf{1+count} + [i \geq 5] \cdot \normf{count}$
& $0.009$ \\
eg-tail & ${x}$ &
$\normf{x} + \sfrac{19}{16} \cdot\normf{z} + \sfrac{7}{4}$ & $0.056$ \\
eg & ${x}$ & \fail & $0.18$ \\
hare-turtle & ${h}$ &
% 1/22·[1+@{A0} ≥ 0 ∧ @{A1} ≥ @{A0}]·(1+@{A0}) + 1/22·[2+@{A0} ≥ 0 ∧ @{A1} ≥ @{A0}]·(2+@{A0}) + 1/22·[3+@{A0} ≥ 0 ∧ @{A1} ≥ @{A0}]·(3+@{A0}) + 1/22·[4+@{A0} ≥ 0 ∧ @{A1} ≥ @{A0}]·(4+@{A0}) + 1/22·[5+@{A0} ≥ 0 ∧ @{A1} ≥ @{A0}]·(5+@{A0}) + 1/22·[6+@{A0} ≥ 0 ∧ @{A1} ≥ @{A0}]·(6+@{A0}) + 1/22·[7+@{A0} ≥ 0 ∧ @{A1} ≥ @{A0}]·(7+@{A0}) + 1/22·[8+@{A0} ≥ 0 ∧ @{A1} ≥ @{A0}]·(8+@{A0}) + 1/22·[9+@{A0} ≥ 0 ∧ @{A1} ≥ @{A0}]·(9+@{A0}) + 1/22·[10+@{A0} ≥ 0 ∧ @{A1} ≥ @{A0}]·(10+@{A0}) + [@{A0} ≥ 0 ∧ @{A0} ≥ 1+@{A1}]·@{A0} + 6/11·[@{A0} ≥ 0 ∧ @{A1} ≥ @{A0}]·@{A0}
$ [h \geq -1 \land t \geq h] \cdot \sfrac{1}{22} \cdot \sum_{i=0}^{10} (h+i)  + [h > t] \cdot \normf{h} + [h \leq t] \sfrac{6}{11} \cdot \normf{h}$
& $0.010$ \\
hawk-dove & ${count}$ & $1$ & $0.013$ \\
mot-ex & ${count}$ & $\normf{1+count}$ & $0.009$ \\
recursive & ${x}$ & $\normf{x} + 10$ & $0.03$ \\
uniform-dist & ${g}$ &
% 1/2·[1+2·@{A1} ≥ 0 ∧ 9 ≥ @{A0}]·(1+2·@{A1}) + 1/2·[9 ≥ @{A0} ∧ 2·@{A1} ≥ 0]·2·@{A1} + [@{A0} ≥ 10 ∧ @{A1} ≥ 0]·@{A1}
$ \sfrac{1}{2} \cdot ([g \geq -\sfrac{1}{2} \land 9 \geq n] \cdot (1 + 2\cdot y) + [9 \geq n] \cdot 2 \cdot \normf{y}) + [n \geq 10] \cdot \normf{g}$
& $0.009$
\\[1ex]
\benchmark{(d)~benchmark from~\citet{BaoTPHR:2022}}
biasdir & ${x}$ & $\norm{1-x} + \normf{x}$ & $0.022$ \\
bin0 & ${x}$ & $\sfrac{1}{2} \cdot \normf{n} \cdot \normf{y} + \normf{x}$ & $0.075$ \\
bin1 & ${x}$ & $\sfrac{1}{2} \cdot \normf{M-n} + \normf{x}$ & $0.012$ \\
% 1/4·〈@{A0}〉 + 1/4·@{A0}^2 + 1/2·[@{A0} ≥ 0 ∧ @{A2} ≥ 0]·@{A0}·@{A2} + 〈@{A1}〉
bin2 & ${x}$ & $\sfrac{1}{4} \cdot (\normf{n} + n^2) + \sfrac{1}{2} \cdot \normf{n} \cdot \normf{y} + \normf{x}$ & $0.141$ \\
deprv & ${z}$ & $\normf{z}$ & $0.011$ \\
detm & ${count}$ & $\normf{11-x} + \normf{count}$ & $0.012$ \\
duel & ${turn}$ & $\normf{turn} + 2 \cdot \normf{continuing}$ & $0.036$ \\
fair & ${count}$ & $\normf{count} + 2\cdot\normf{1-c_2}$ & $0.86$ \\
gambler0 & ${z}$ & $[x\geq 0 \land y \geq x] \cdot (x\cdot y -x^2) + \normf{z}$ & $0.063$ \\
geo0 & ${z}$ & $ \normf{1-flip} + \normf{z}$ & $0.014$ \\
\end{experimenttable}

\begin{experimenttable}{tab:evaluation:2}
\benchmark{(d)~benchmark from~\citet{BaoTPHR:2022} (cont'd)}
geo1 & ${z}$ & $\normf{1-flip} +  \normf{z}$ & $0.014$ \\
geo2 & ${z}$ & $\normf{1-flip} +  \normf{z}$ & $0.015$ \\
geoar0 & ${x}$ & $\sfrac{3}{2}(1 + \normf{1+y} + \normf{x})$ & $0.09$\\
linexp & ${z}$ & $\normf{z} + \sfrac{21}{8} \cdot \normf{n}$ & $0.014$ \\
mart & ${rounds}$ & $\normf{rounds} + 3 \cdot \normf{b}$ & $0.012$ \\
prinsys & ${bool}$ & $1$ & $0.013$ \\
revbin & ${z}$ & $\normf{z} + 2\cdot \normf{x}$ & $0.011$ \\
sum0 & ${x}$ & $\sfrac{1}{4} \cdot (\normf{n} + n^2)  + \normf{x}$ & $0.034$
\\[1ex]
\benchmark{(e)~examples from~\cite{KKMO:ACM:18}}
faulty\_sum & ${x}$ & $\sfrac{3}{7} \cdot \normf{x}^2 + \sfrac{4}{7} \cdot \normf{x} + 1$ & $0.04$ \\
geo & $\zerof$ & $0$ & $0.009$
\\[1ex]
\benchmark{(f)~benchmark from~\cite{AMS20}}
%[@{A1} ≥ @{A2} ∧ @{A2} ≥ @{A0}]·(-@{A0}·@{A1}+@{A0}·@{A2}+@{A1}·@{A2}-@{A2}^2)
bridge & ${count}$ & $[b \geq x \land x \geq a] \cdot (-a \cdot b + a \cdot x + b \cdot x -x^2)$ & $0.067$ \\
coupon-10 & ${count}$ & $110$ & $0.145$ \\
coupon-50 & ${count}$ & $2550$ & $2.691$ \\
coupon-100 & ${count}$ & $10100$ & $10.63$ \\
nest-1 & ${count}$ & $4 \cdot \normf{n}$ & $0.014$ \\
nest-2 & ${count}$ & \fail & $100.00$ \\
trader-5 & $\normf{price}$ &
% 1/3·[4+4·@{A0} ≥ 0 ∧ 5+5·@{A0} ≥ 0]·(20+40·@{A0}+20·@{A0}^2) + 7/6·〈5+5·@{A0}〉
% GM: simplified
$\sfrac{20}{3} \cdot [price \geq -1] \cdot (1 + 2 \cdot price + price^2) + \sfrac{35}{6} \cdot \normf{1 + price}$
\\[1ex]
\benchmark{(g) examples from this work}
balls & ${b}$ & $\sfrac{1}{5} \cdot \normf{n}$ & $0.012$ \\
throws & ${throws}$ & $5$ & $0.012$ \\
every & ${number}$ & \fail & $100.0$ \\
every-5 & ${number}$ & $20$ & $0.017$ \\
every-while & ${number}$ & $25$ & $0.973$ \\
double\_recursive & ${y}$ & $0$ & $0.011$ \\
hire & ${hire}$ & $\normf{n}$ & $0.504$ \\
rdwalk & ${n}$ & $2\cdot \normf{n}$ & $0.016$ \\
rec1 & ${n}$ & $\sfrac{1}{2} \cdot (1+ \normf{n})$ & $0.014$ \\
\end{experimenttable}

\section{Related Work}
\label{Related}

Very briefly, we refer to the extensive
literature of analysis methods
for (non-determi\-nistic, imperative) probabilistic programs introduced in the last years.
These have been provided in the form of
\emph{abstract interpretations}~\citep{ChakarovS14};
\emph{martingales}, eg., ranking super-martingales~\citep{ACN:POPL:18,TOUH:ATVA:18,WFGCQS:PLDI:19}; or equivalently
\emph{Lyapunov ranking functions}~\citep{BG:RTA:05};
\emph{model checking}~\citep{Katoen16};
\emph{program logics}~\citep{MM05,KaminskiKatoen,KK:LICS:17,McIverMKK18,KKMO:ACM:18,NgoCH18,WangHR18,BaoTPHR:2022};
\emph{proof assistants}~\citep{BartheGB09};
\emph{recurrence relations}~\citep{Sedgewick96};
methods based on \emph{program analysis}~\citep{Kozen:JCSC:85,KatoenMMM10,CelikuM05}; or
\emph{symbolic inference}~\citep{GehrMV16};
and finally \emph{type systems}~\citep{ADG:LICS:19,WangKH20,LMZ:2022,VasilenkoVB22}.
%
% GM: space
In the following, we restrict our focus on related work concerned with the analysis of \emph{quantitative} of (non-determi\-nistic) probabilistic programs, notably to the areas of (automated) \emph{invariant} generation and \emph{expected cost analysis}.

\paragraph{Invariant generation.}

Generally, invariant generation is more challenging than the 
expected result value analysis that we study. Still, often \evimp\ derives exact bounds, thus establishing invariants. 
On the other hand, our methodology is applicable in a more general framework, for example to
expected cost analysis. Further, our methodology encompasses recursive (imperative) programs, which is---to the best of
our knowledge---not the case for any of the approaches on invariant generation
(or expected cost analysis, for that matter).
\citet{KatoenMMM10} provide constraint-based methods for the semi-automated generation of
linear quantitative invariants, based on sophisticated proof-based methods.
The studied examples are highly interesting and have been integrated into our benchmarks (see Section~\ref{Implementation}). The form of expectations considered can be very expressive.
We emphasise, however, that our method is fully automated, while the
approach in~\citep{KatoenMMM10} is only partly automated, in particular nested loops require user-interaction.
Related results have been reported by~\citet{ChakarovS14}, suitable adapting
an abstract interpretation framework to the notion of invariant generation. Their motivating
example can be handled by~\evimp\ fully automatically, establishing a slightly worse constant
bound than the exact bound (see Section~\ref{Implementation}).
In contrast to~\citep{KatoenMMM10,ChakarovS14}, \citet{WangHR18} and the very recent~\citet{BaoTPHR:2022}
provide fully automated (linear) invariant generation methodologies.
\citet{WangHR18} provide a compelling algebraic framework for the program analysis of probabilistic programs.
Apart from the here relevant linear invariant generation,
interprocedural Bayesian inference analysis and the Markov decision problem are conducted. 
% GM: space
% These analyses are orthogonal to our methodologies, although potentially the here developed concepts
% of automation may provide fruitful foundations.
Wrt.\ linear invariant generation, we have considered
all provided examples in our benchmarks and obtained comparable results, while improving the speed
of the analysis by a magnitude (in comparison to the analysis times reported in~\cite{WangHR18}), cf.~Section~\ref{Implementation}.
Automation of the method developed in~\cite{WangHR18} is based, like ours, on a template approach. To overcome
the dependency on templates, \citet{BaoTPHR:2022} have developed a highly interesting learning approach.
Their approach is data-driven.
% GM: space
% Based on sampling of the program in question on input states,
% a (neural) model tree is built representing learned guesses on the quantitative invariants. In a second phase these
% invariants are verified. The second stage incorporates a \emph{counter-example guided inductive synthesis} loop,
% cf.~\citet{Solar-Lezama13}.
Apart from invariants, Bao et al.\ also consider \emph{sub-invariants} which are dual to our upper invariants, establishing
lower bounds on the pre-expectations. In both cases, however, analysis times are high.
We have incorporated the benchmark examples from~\cite{BaoTPHR:2022} into our
test suites. In all cases our tool~\evimp\ provides precise invariants, cf.~Section~\ref{Implementation}.
This is remarkable, as reported in~\cite{BaoTPHR:2022}, their prototype implementation \tool{Exist}
handles only 12/18 of their benchmark suite fully automatically.
% GM: space
Further, as mentioned, our analysis times are in the milliseconds range for all benchmarks.

\paragraph{Expected cost analysis.}

% GM: space
% Expected cost analysis constitutes a straightforward extension of resource analysis to
% probabilistic programs. 
\citet{KKMO:ACM:18} establishes an expected cost analysis of recursive programs and we have suited
the corresponding two example to our benchmark suite. Both examples can be handled fully automatically. Technically
our development of recursive programs constitutes an extension as our language (and methodology) admits local variables,
formal parameters and (unrestricted) return values. This allows a more natural representation of programs (see Section~\ref{PWhile}). Still the definition of our expectation transformer $\et{\fn}$ is closely related to the corresponding definition in~\cite[Chapter~7]{KKMO:ACM:18}.
We emphasise, however, that in~\cite{KKMO:ACM:18} automation is discussed only superficially.
\citet{AMS20}, on the other hand, take automation very seriously. As mentioned above, we have
taken inspiration from their work in the modular analysis of recursion-free programs. Despite our efforts, however,
our prototype implementation~\evimp\ lacks scalability and speed in comparison to their tool~\ecoimp. Wrt.\ precision, however,
we often derive the same bounds on expected costs. Further, and as argued, our methodology focusing
on expected value analysis is more general and our implementation incorporated the highly non-trivial
handling of recursive programs.
Technically the closest comparison is to work on amortised cost analysis of functional languages (eg.~\cite{WangKH20,LMZ:2022}), as resource parametricity~\cite{Hoffmann11} has been studied in this context. In comparison to \citet{WangKH20}, our
tool~\evimp\ cannot infer higher-order moments but concerning expectations seems to have better support for recursion. For instance, we were not able to reproduce an expected cost analysis of the \pwhile!balls! procedure in their~\raml\
prototype.
Similarly, \evimp\ is no match to \citet{LMZ:2022}'s \atlas, when it comes to the precise analysis of (probabilistic) data
structures. Remarkably, however, the \pwhile!balls! benchmark can only be expressed
convolutedly in their language. (Due to the lack of support for general, inductive, data structures.) Further, their automated analysis does not derive an optimal bound.

\paragraph{Expected value analysis.}

Finally, we briefly remark on very recent and partly motivating work by~\citet{VasilenkoVB22}.
In~\citep{VasilenkoVB22} a refinement type system---Liquid Haskell, cf.~\cite{Vazou16,HVH:2020}---is updated, to reason about relational properties of probabilistic computations.
% GM: space
One of the (simple) examples studied is equivalent to procedure \pwhile!balls!, depicted in Listing~\ref{fig:motivation}(\subref{lst:balls}) and~\citet{VasilenkoVB22} provide a semi-automated proof that the expected value of $b$ is $p \cdot n$ is provided. No attempt at full automation is made.

\section{Conclusion}
\label{Conclusion}

We established a fully automated expected result value analysis for probabilistic programs in the presence of natural programming constructs, in particular recursion. Our analysis is in the form of an \emph{expectation transformer}
$\et{\cdot}$ and its syntactic representation $\ET{\cdot}$. As argued, automated inference of upper invariants
is challenging for \PWHILE, due to the presence of recursion. We have overcome these challenges and implemented
the established methodology in our novel prototype implementation~\evimp.

In future work, we aim to incorporate
\begin{inparaenum}[(i)]
\item \emph{more program features} and , like eg.\ support for \emph{dynamic} uniform distributions;
\item improve the \emph{constraint solving} capabilities of our prototype implementation, to handle the
  analysis of further natural probabilistic programs and data structures fully automatically.
\end{inparaenum}
 
\begin{acks}
  We would like the thank the annoymous reviewers for their work and invaluable suggestions, which greatly
  improved our presentation. This work is partly supported by the \grantsponsor{}{INRIA Associate Team}{} \grantnum{}{TCPRO3}.  
\end{acks}

\section{Availability of Data and Software}

Our prototype implementation~\evimp is publicly available via the following Zenodo link: \url{https://doi.org/10.5281/zenodo.7706691}. The artifact~\cite{AMS23b} is given as a Docker image running a minimal Linux distribution, containing the sources in directory \text{/evimp}, as well as pre-compiled executables. A second, larger, image contains the source distribution as well as the complete toolchain to compile~\evimp. We have successfully tested the images under Linux, MacOS and Windows, as long as these were non-ARM architectures. For ARM64 architectures, the artifact additionaly provides instructions to install from source. Detailed instructions on the use of~\evimp are provided as well.

\bibliographystyle{ACM-Reference-Format}
%\bibliography{references}
%%% -*-BibTeX-*-
%%% Do NOT edit. File created by BibTeX with style
%%% ACM-Reference-Format-Journals [18-Jan-2012].

\newpage
\appendix

\section{Mathematical Background}

\begin{proposition}[Function Lifting of $\omega$-CPOs {\citep[Section 8.3.3]{Winskel:93}}]\label{p:fun-omega}
  Let $(D,\sqsubseteq)$ be an $\omega$-CPO.
  Then $(A \to D, \bm{\sqsubseteq})$ where $\bm{\sqsubseteq}$ extends $\sqsubseteq$
  point-wise forms an $\omega$-CPO, with the supremum on $A \to D$ given point-wise.
  If $\sqsubseteq$ has least and greatest elements $\bot$ and $\top$, then $\bm{\bot}$ and $\bm{\top}$
  are the least and greatest elements of $\bm{\sqsubseteq}$, respectively.
\end{proposition}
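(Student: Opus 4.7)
The plan is to proceed entirely pointwise: every order-theoretic property required of the function space $(A \to D, \bm{\sqsubseteq})$ is obtained by instantiating the corresponding property of $(D, \sqsubseteq)$ at each point $a \in A$. First I would verify that $\bm{\sqsubseteq}$, defined by $f \bm{\sqsubseteq} g \iff \forall a \in A.\, f(a) \sqsubseteq g(a)$, is a partial order on $A \to D$. Reflexivity, transitivity and antisymmetry each reduce immediately to the matching property of $\sqsubseteq$ by quantifying over $a$; antisymmetry additionally uses function extensionality to conclude $f = g$ from the pointwise equality $f(a) = g(a)$ for all $a$.

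Next I would address the $\omega$-completeness. Given an $\omega$-chain $f_0 \bm{\sqsubseteq} f_1 \bm{\sqsubseteq} \cdots$ in $A \to D$, I observe that for every fixed $a \in A$ the sequence $(f_n(a))_{n \in \mathbb{N}}$ is an $\omega$-chain in $D$ and therefore admits a supremum $s(a) \defsym \sup_n f_n(a)$, by the assumption that $(D,\sqsubseteq)$ is an $\omega$-CPO. Define the candidate supremum $f \colon A \to D$ by $f \defsym \lambda a.\, s(a)$. I then show $f$ is an upper bound of the chain: for each $n$ and $a$, $f_n(a) \sqsubseteq \sup_k f_k(a) = f(a)$, so $f_n \bm{\sqsubseteq} f$. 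Least-upper-boundness is equally routine: if $g$ is any upper bound, then for each $a$, $g(a)$ bounds $(f_n(a))_n$ from above in $D$, whence $f(a) = \sup_n f_n(a) \sqsubseteq g(a)$, yielding $f \bm{\sqsubseteq} g$. This both establishes the $\omega$-CPO structure and simultaneously witnesses that suprema in $A \to D$ are computed pointwise, as claimed.

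Finally, I would settle the statement about extremal elements. If $\bot$ is least in $D$, then for every $g \colon A \to D$ and every $a \in A$ we have $\bm{\bot}(a) = \bot \sqsubseteq g(a)$, so $\bm{\bot} \bm{\sqsubseteq} g$ and $\bm{\bot}$ is least. The dual argument, replacing $\sqsubseteq$ by its reverse comparison with $\top$, shows $\bm{\top}$ is greatest.

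There is no real obstacle here; the proposition is a textbook construction (indeed attributed to~\citep[Section 8.3.3]{Winskel:93}), and the only thing one must be careful about is to distinguish the pointwise-lifted order and extremal elements (written in bold) from the underlying ones in $D$, and to invoke function extensionality at the one point where antisymmetry is concluded. A one-paragraph proof noting the pointwise construction of suprema and citing Winskel would suffice.
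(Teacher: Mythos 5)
Your proof is correct and is exactly the standard pointwise-lifting argument; the paper itself gives no proof but simply cites \citep[Section 8.3.3]{Winskel:93}, where this same construction appears. Nothing further is needed.
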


% \begin{definition}
%   A \emph{complete lattice} $(D,\sqsubseteq)$  consists of a universe $D$
%   and a partial order $\sqsubset$ on $D$ such that every subset $S \subseteq D$,
%   has a supremum $\sup S \in D$.
% \end{definition}

\begin{theorem}[Kleene's Fixed-Point Theorem for $\omega$-CPOs, {\citep[Theorem 5.11]{Winskel:93}}]\label{t:fixed-point}
  Let $(D,\sqsubseteq)$ be a $\omega$-CPO with least element $\bot$. Let $\chi \colon D \to D$
  be continuous (thus monotone).
  Then $\chi$ has a least fixed-point given by
  \[
    \lfp \chi = \sup_{n \in \Nat} \chi^n(\bot) \tpkt
  \]
\end{theorem}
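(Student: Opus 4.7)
The plan is to show directly that $d \defsym \sup_{n \in \Nat} \chi^n(\bot)$ is well-defined, is a fixed point of $\chi$, and is the least such. First, I would establish that the sequence $(\chi^n(\bot))_{n \in \Nat}$ is an $\omega$-chain in $D$ by induction on $n$: the base case $\bot \sqsubseteq \chi(\bot)$ follows because $\bot$ is the least element of $D$; for the step, if $\chi^n(\bot) \sqsubseteq \chi^{n+1}(\bot)$, then monotonicity of $\chi$ (implied by continuity, which preserves binary suprema of comparable elements) yields $\chi^{n+1}(\bot) \sqsubseteq \chi^{n+2}(\bot)$. Because $D$ is an $\omega$-CPO, the supremum $d$ therefore exists.

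Next, I would verify that $d$ is a fixed point of $\chi$. Applying $\chi$ and invoking continuity,
\[
  \chi(d) \;=\; \chi\bigl(\sup_{n \in \Nat} \chi^n(\bot)\bigr) \;=\; \sup_{n \in \Nat} \chi^{n+1}(\bot) \tpkt
\]
The latter supremum equals $\sup_{n \in \Nat} \chi^n(\bot) = d$, because prepending the element $\chi^0(\bot) = \bot$ to the chain cannot increase the supremum (as $\bot$ lies below every $\chi^{n+1}(\bot)$, and both chains are cofinal in one another). This is the main conceptual step and the place where continuity is genuinely needed: monotonicity alone only delivers $d \sqsubseteq \chi(d)$, so continuity is what promotes this post-fixed point to an actual fixed point.

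Finally, I would establish minimality. Let $e$ be any fixed point of $\chi$, so $\chi(e) = e$. A straightforward induction shows $\chi^n(\bot) \sqsubseteq e$ for all $n \in \Nat$: the base case holds since $\bot$ is least; the inductive step uses monotonicity to obtain $\chi^{n+1}(\bot) \sqsubseteq \chi(e) = e$. Taking the supremum over $n$ yields $d \sqsubseteq e$, which combined with the fixed-point property gives $d = \lfp \chi$.

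The construction is entirely classical and presents no real obstacle; the only point requiring any care is the application of continuity, where one must use exactly the $\omega$-CPO notion (preservation of suprema of $\omega$-chains) and the re-indexing observation $\sup_n \chi^{n+1}(\bot) = \sup_n \chi^n(\bot)$. Everything else is routine $\omega$-CPO bookkeeping, so one could equally well just invoke \citet[Theorem 5.11]{Winskel:93} verbatim.
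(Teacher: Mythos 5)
Your proof is correct and entirely standard; the paper itself does not prove this statement but merely cites \citet[Theorem 5.11]{Winskel:93} as background, and the cited proof proceeds exactly along your lines (the iterates form an $\omega$-chain by induction, continuity plus the re-indexing $\sup_n \chi^{n+1}(\bot) = \sup_n \chi^n(\bot)$ gives the fixed-point property, and minimality follows by induction against an arbitrary fixed point). No gaps to report.
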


\begin{lemma}[Continuity of Expectation]\label{l:continuity-E}
  $\E{\mu}{\sup_{n \in \Nat} f_n} = \sup_{n \in \Nat} \E{\mu}{f_n}$
  for all $\omega$-chains $(f_n)_{n \in \Nat}$.
\end{lemma}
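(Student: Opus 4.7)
The lemma is the monotone-convergence / Beppo-Levi principle specialised to the discrete expectation $\E{\mu}{f} = \sum_{a \in A} \mu(a) \cdot f(a)$. The plan is to prove the two inequalities separately, with one direction being an immediate monotonicity argument and the other a standard interchange of two suprema.

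For the $(\geq)$ direction, since $f_n \leqf \sup_m f_m$ pointwise by definition of the pointwise supremum, term-wise monotonicity of the non-negative sum defining $\E{\mu}{-}$ yields $\E{\mu}{f_n} \leq \E{\mu}{\sup_m f_m}$ for every $n$. Taking the supremum over $n$ on the left-hand side gives $\sup_n \E{\mu}{f_n} \leq \E{\mu}{\sup_n f_n}$.

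For the $(\leq)$ direction, I would unfold $\E{\mu}{\sup_n f_n}$ as $\sum_{a \in A} \mu(a) \cdot \sup_n f_n(a)$ and interchange the order of summation and supremum. The clean way is to pass through finite approximations: for every finite $F \subseteq A$, the sequence $(\mu(a) \cdot f_n(a))_n$ is a non-decreasing chain in $\Rext$ for each $a$ (using the convention $0 \cdot \infty = 0$ together with the $\omega$-chain hypothesis on $(f_n)_n$), and since finite sums in $\Rext$ commute with suprema of non-decreasing chains, one obtains
\begin{equation*}
  \sum_{a \in F} \mu(a) \cdot \sup_n f_n(a)
  \;=\; \sup_n \sum_{a \in F} \mu(a) \cdot f_n(a)
  \;\leq\; \sup_n \E{\mu}{f_n} \tpkt
\end{equation*}
Since the countable sum $\sum_{a \in A}$ is by definition the supremum over finite $F \subseteq A$ of the partial sums $\sum_{a \in F}$, taking the supremum over $F$ on the left-hand side concludes $\E{\mu}{\sup_n f_n} \leq \sup_n \E{\mu}{f_n}$, which together with the first step yields the desired equality.

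The only real delicacy is the bookkeeping with $\Rext$: one must respect the convention $0 \cdot \infty = 0$ and use that non-negative countable sums are continuous from below, so that the two suprema (over $n$ and over finite $F \subseteq A$) may legitimately be exchanged. Once that is in place the proof is essentially the textbook monotone convergence theorem restricted to counting measure on the countable set $A$, and relies only on the $\omega$-CPO structure of $\Rext$ and its pointwise lifting to $A \to \Rext$ provided by Proposition~\ref{p:fun-omega}.
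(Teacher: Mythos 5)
Your proof is correct. The paper itself does not give an argument at all: it simply observes that the statement is the discrete instance of Lebesgue's Monotone Convergence Theorem and cites \citep[Theorem 21.38]{Schlechter:96}. You instead prove the discrete case directly, splitting the equality into the trivial $(\geq)$ direction (term-wise monotonicity of the non-negative sum) and the $(\leq)$ direction obtained by writing the countable sum as a supremum of finite partial sums and exchanging that supremum with the supremum over $n$, using that finite sums of non-decreasing chains in $\Rext$ commute with suprema. This is exactly the standard proof of monotone convergence for counting measure, so mathematically you and the paper rest on the same fact; the difference is that your version is self-contained and elementary (no recourse to general measure theory), at the cost of a page of bookkeeping the paper avoids by citation. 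You are also right to flag the two genuine delicacies --- the convention $0 \cdot \infty = 0$ when pulling $\mu(a)$ out of the supremum, and the need for \emph{monotonicity} of the chains when distributing $\sup_n$ over a finite sum (for arbitrary sequences only an inequality holds) --- both of which are handled correctly in your argument.
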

\begin{proof}
  This is the discrete version of Lebesgue's Monotone Convergence Theorem~\citep[Theorem 21.38]{Schlechter:96}.
\end{proof}

\section{Omitted Proofs}

\again{p:laws}
\begin{proof}
  The proofs follow the pattern of the proof of continuity of the expected cost transformer in~\cite[Lemma~6.2]{AMS20}.  
  % For commands $\cmd$ different from return statements, a straightforward case-distinction on the definition
  % of $\et{\cmd}$ reveals the linearity law. Note, however, that the linearity in general fails for return statement. In this case
  % control is passed to the calling procedure, which roughly stated breaks linearity.
  % Finally, the loop-invariant and procedure-invariant laws are consequences of Park's Theorem.  
\end{proof}

We define finite approximations of procedure environments $\et{\prog}^{(i)}$ inductively,
so that $\et{\prog}^{(0)} \defsym \lam{\fn}{\lam{f\,\vec{\val}\,\mem}{0}}$
and $\et{\prog}^{(i+1)} \defsym \lam{\fn}{\et[\et{\prog}^{(i)}]{\fn}}$, where $\fn \in \prog$.
In this way $\et{\prog}^{(0)}$ represents the poorest approximation by the constant zero function, while approximations
are refined iteratively.

\begin{proposition*}[Finite Approximations of Procedure Environments]{p:approximation}
  For any program $\prog$, we have $\et{\prog} = \sup_{i \geqslant 0} \et{\prog}^{(i)}$.
\end{proposition*}
\begin{proof}
  Direct consequence of the continuity of $\et{\fn}$ and Knaster-Tarski fixed-point theorem.
\end{proof}

\again{t:soundness}
\begin{proof}
  Let $\termtwo, \term \in \Term{V}{Z}$ and $\theta \colon Z \to \Val$. Then we prove for all commands
  $\cmd$ that
  \begin{equation}
    \label{eq:aux}
     \et[\aenv][\sem{\termtwo \theta}]{\cmd} \app \sem{\term \theta} \leq \sem{(\ET[\penv][\termtwo]{\cmd} \app \term) \theta}
    \tpkt
  \end{equation}
  Let $s \defsym \sem{\termtwo \theta}$ and
  $t \defsym \sem{\term \theta}$. In order to prove~\eqref{eq:aux}, we prove the following,
  for all $i \in \Nat$:
  \begin{equation*}
    \et[\aenv_i][s]{\cmd} \app t \leq \sem{(\ET[\penv][\termtwo]{\cmd} \app \term) \app \theta} 
    \tkom
  \end{equation*}
  where
  \begin{inparaenum}[(i)]
  \item $\aenv_0 \app \fn \defsym \lambda \fn. 0$;
  \item $\aenv_{i+1} \app \fn \defsym \et[\aenv_i]{\fn}$;
  \end{inparaenum}
  and $\aenv \defsym \sup_{i \geqslant 0} \aenv_i$.

  In proof, we elide the logical context $\ctx[\fn]$, employed in the
  definition of $\ET{\fn}$ for notational convenience. As the essence of
  this context information is that logical variables are always
  instantiated non-negatively, this can be guaranteed globally.
  
  We proceed by main induction on $i$ and side induction on $\cmd$, where we focus on the (main) step case, as the case for $i=0$ is similar, but simpler. Thus let $i > 0$ and we
  proceed by case induction on $\cmd$, considering the most interesting cases, only.
  \begin{proofcases}
    \proofcase{$\cskip$}
    By unfolding of definitions, we easily obtain 
    \begin{equation*}
      \et[\aenv_{i+1}][s]{\cskip} \app t = \sem{\term \app \theta} =
      \sem{(\ET[\aenv_{i+1}][\termtwo]{\cskip} \app \term) \app \theta}
      \tkom
    \end{equation*}
    which concludes the case.
        
    \proofcase{$\vx \sample \ccall{\fn}{\vec{\expr}}$}
    Suppose $\fn \in \Prog$. By unfolding of definitions, we obtain for $\mem \in \Mem{V}$
    \begin{align*}
      \et[\aenv_{i+1}][s]{\vx \sample \ccall{\fn}{\vec{\expr}}} \app t \app \mem
      & = \aenv_{i+1} \app \fn \app \underbrace{(\lam{\val\,\memtwo}{t \app (\meml{\mem} \uplus \memg{\memtwo}) [\vx \mapsto \val]})}_{\defsym f} \app (\sem{\vec{\expr}} \app \mem) \app \memg{\mem}
      \\
      & = \et[\aenv_i]{\fn} \app f \app (\sem{\vec{\expr}} \app \mem) \app \memg{\mem}
      \\
      & = \et[\aenv_i][f]{\bdy{\fn}} \app (\lam{\memtwo}{t \app (\meml{\mem} \uplus \memg{\memtwo}) [\vx \mapsto 0]} \app
        \memg{\mem} \uplus \{\params{\fn} \mapsto \sem{\vec{\expr}} \app \mem \})
      \\
      & \leq \et[\aenv_i][\sem{\termk{\fn} \app \theta}]{\bdy{\fn}} \app \sem{\termk{\fn}[\lvr \mapsto 0] \app \theta} \app
        \memg{\mem} \uplus \{\params{\fn} \mapsto \sem{\vec{\expr}} \app \mem \}
      \\
      & \leq \sem{(\ET[\aenv_{i+1}][\termk{\fn}]{\bdy{\fn}} \app \termk{\fn}[\lvr \mapsto 0]) \app \theta} \app
       \memg{\mem} \uplus \{\params{\fn} \mapsto \sem{\vec{\expr}} \app \mem \}
      \\
      & = \sem{(\ET[\aenv_{i+1}][\termk{\fn}]{\bdy{\fn}} \app \termk{\fn}[\lvr \mapsto 0]) [\params{\fn} \mapsto \lvas{\fn}] \app \theta} \app
        \memg{\mem} \uplus \{\lvas{\fn} \mapsto \sem{\vec{\expr}} \app \mem \}
      \\
      & = \sem{(\ET{\fn} \app \termk{\fn}) \app \theta} \app
        \memg{\mem} \uplus \{\lvas{\fn} \mapsto \sem{\vec{\expr}} \app \mem \}
      \\
      & \leq \sem{\termh{\fn} \app \theta} \app \memg{\mem} \uplus \{\vec{\lva}_{\fn} \mapsto  \sem{\vec{\expr}} \app \mem \}
      \\
      & = \sem{\termh{\fn} [\vec{\lva}_{\fn} \mapsto \vec{\expr}] \app \theta} \app \mem
        = \sem{(\ET[\aenv_{i+1}][\termtwo]{\vx \sample \ccall{\fn}{\vec{\expr}}} \app \term \app \theta} \app \mem
        \tpkt
    \end{align*}
    Here, we have used in conjunction with reduction to definitions
    \begin{inparaenum}[(i)]
    \item two instances of the assumed side-condition $\vdash \term[\vx \mapsto \lvr] \leq \termk{\fntwo} \app \theta$ in line three;
    \item together with monotonicity of the expectation transformer $\et{\cmd}$, cf.~Figure~\ref{fig:et-idents};
    \item induction hypothesis in line four; and finally
    \item in the pre-ultimate line, the main constraint on the soundness of the inference mechanisms~\eqref{eq:main-constraints}.
    \end{inparaenum}
    This concludes the case.

    \proofcase{$\cret{\expr}$}
    By unfolding of definitions, we obtain for $\mem \in \Mem{V}$
    \begin{align*}
      \et[\aenv_{i+1}][s]{\cret{\expr}} \app t \app \mem & = \sem{\termtwo \app \theta} \app (\sem{E} \app \mem) \app \memg{\mem}
      \\
                                                         & = \sem{\termtwo [\lvr \mapsto \expr] \app \theta} \app \mem
      \\
                                                         & = \sem{(\ET[\aenv_{i+1}][\termtwo]{\cret{\expr}} \app \term) \app \theta} \app \mem
                                                           \tpkt
    \end{align*}
    This concludes the case.

    \proofcase{$\clet{\vx}{\expr}{\cmd}$}
    By unfolding of definitions in conjunction with  application of the induction hypothesis, we obtain for $\mem \in \Mem{V}$
    \begin{align*}
      \et[\aenv_{i+1}][s]{\clet{\vx}{\expr}{\cmd}} \app t \app \mem & = \et[\aenv_{i+1}][s]{\cmd} \app t \app \mem [\vx \mapsto \sem{\expr}\app\mem]
      \\
                                                                    & \leq \sem{(\ET[\aenv_{i+1}][\termtwo]{\cmd} \app \term) \app \theta} \app \mem [\vx \mapsto \sem{\expr}\app\mem]
      \\
                                                                    & = \sem{(\ET[\aenv_{i+1}][\termtwo]{\cmd} \app \term) [\vx \mapsto \expr] \app \theta} \app \mem
      \\
      & = \sem{(\ET[\aenv_{i+1}][\termtwo]{\clet{\vx}{\expr}{\cmd}} \app \term) \app \theta} \app \mem
    \end{align*}
    In the third line, we employ that due to the variable condition the local variable $\vx$ is distinct from all
    (global) variables in the domain of $\mem$. This concludes the case.

    \proofcase{$\cmd \sep \cmdtwo$}
    By unfolding of definitions in conjunction with two applications of the induction hypothesis, we obtain for $\mem \in \Mem{V}$
    \begin{align*}
      \et[\aenv_{i+1}][s]{\cmd \sep \cmdtwo} \app t \app \mem & = \et[\aenv_{i+1}][s]{\cmd} \app (\et[\aenv_{i+1}][s]{\cmdtwo} \app t) \app \mem
      \\
                                                              & \leq \sem{(\ET[\aenv_{i+1}][\termtwo]{\cmd} \app \term) \app \theta} \app
                                                                \sem{(\ET[\aenv_{i+1}][\termtwo]{\cmdtwo} \app \term) \app \theta} \app \mem 
      \\
                                                              & = \sem{\ET[\aenv_{i+1}][\termtwo]{\cmd} \app ( \ET[\aenv_{i+1}][\termtwo]{\cmdtwo} \app \term) \app \theta} \app \mem
      \\
      & = \sem{(\ET[\aenv_{i+1}][\termtwo]{\cmd \sep \cmdtwo} \app \term) \app \theta} \app \mem
    \end{align*}
    Apart from applications of the induction hypothesis, in line two, we have employed monotonicity,
    cf.~Figure~\ref{fig:et-idents}. This concludes the case.

    \proofcase{$\cif{\bexpr}{\cmd}{\cmdtwo}$}
    By unfolding of definitions in conjunction with two applications of the induction hypothesis, we obtain for $\mem \in \Mem{V}$
    \begin{align*}
      \et[\aenv_{i+1}][s]{\cif{\bexpr}{\cmd}{\cmdtwo}} \app t \app \mem & = [\sem{\bexpr} \app \mem] \cdot \et[\aenv_{i+1}][s]{\cmd} \app t \app \mem + [\sem{\neg\bexpr} \app \mem] \cdot \et[\aenv_{i+1}][s]{\cmdtwo}\app t \app \mem
      \\
                                                                        & \leq [\sem{\bexpr} \app \mem] \cdot \sem{(\ET[\aenv_{i+1}][\termtwo]{\cmd} \app \term) \app \theta} + [\sem{\neg\bexpr} \app \mem] \cdot \sem{(\ET[\aenv_{i+1}][\termtwo]{\cmdtwo} \app \term) \app \theta}
      \\
                                                                        & = \sem{([\bexpr] \cdot \ET[\aenv_{i+1}][\termtwo]{\cmd} \app \term) + [\neg{\bexpr}] \cdot \ET[\aenv_{i+1}][\termtwo]{\cmdtwo} \app \term ) \app \theta} \app \mem
      \\
      & = \sem{(\ET[\aenv_{i+1}][\termtwo]{\cif{\bexpr}{\cmd}{\cmdtwo}} \app \term) \app \theta} \app \mem
    \end{align*}
    This concludes the case.
    
    \proofcase{$\cwhile{\bexpr}{\cmdtwo}$}
    In this case,
    $(\ET[\penv][\termtwo]{\cmd} \app \term)\app \theta = \termthree \app \theta$
    for some term $\termthree$, satisfying
    \begin{inparaenum}[(i)]
      \item $\bexpr \vdash \ET[\penv][\termtwo]{\cmd} \app \term \leq \termthree$ and
      \item $\neg\bexpr \vdash \term \leq \termthree$.
    \end{inparaenum}
    By induction hypothesis and monotonicity of $\et[\penv][\termtwo]{\cmdtwo}$, this says nothing more than
    that $\sem{\termthree \app \theta}$ is a loop-invariant for the while loop, wrt.\ $\sem{\term \app \theta}$ (see \Cref{fig:et-idents}).

    \proofcase{$\cnd{\cmd}{\cmdtwo}$}
    In this case,
    $(\ET[\penv][\termtwo]{\cmd} \app \term)\app \theta = \termthree \app \theta$
    for some term $\termthree$, satisfying
    \begin{inparaenum}[(i)]
    \item $\vdash \ET[\penv][\termtwo]{\cmd} \app \norm \leq \termthree$ and
    \item $\vdash \ET[\penv][\termtwo]{\cmdtwo} \app \norm \leq \termthree$.
    \end{inparaenum}
    The case follows by of two applications of induction hypothesis.
  \end{proofcases}

\end{proof}

\section*{Additional Benchmarks}

We detail in Figures~\ref{fig:additional} and~\ref{fig:additional2} the benchmarks from this paper, whose evaluation results are given in Table~\ref{tab:evaluation:2}~(h).

\begin{figure*}
\caption{Additional Benchmark Examples}
\label{fig:additional}
\smallskip
\centering
\begin{subfigure}{0.45\linewidth}
\begin{lstlisting}[style=pwhile,style=framed]
def f(x):
  var y
  if (x $\geq$ 0) {
    if (Bernoulli(\frac{1}{2})) {
      y $\sample$ f(x) }
    if (Bernoulli(\frac{1}{3})) {
      y $\sample$ f(y) }
    };
  return y
\end{lstlisting}
\caption{\texttt{double\_recursive} example.}
\end{subfigure}
\quad
\begin{subfigure}{0.45\linewidth}
\begin{lstlisting}[style=pwhile,style=framed]
def rdwalk(n):
  var c $\ass$ 0  
  if (n > 1) {
    if (Bernoulli(\frac{1}{2})) {
      c $\sample$ rdwalk(n-2) }
    else {
      c $\sample$ rdwalk(n+1) }
    return c+1 }
  else { return c }
\end{lstlisting}
\caption{\texttt{rdwalk} benchmark example.} 
\end{subfigure}
\end{figure*}

\begin{figure*}
\caption{Additional benchmark examples (cont'd)}
\label{fig:additional2}
\smallskip
\centering
\begin{subfigure}[b]{0.45\linewidth}
\begin{lstlisting}[style=pwhile,style=framed]
def f(n):
 var m
 if (n > 0) {
   m $\sample$ f(n - 1)
 };
 m $\sample$ m + Bernoulli($\frac{1}{2}$);
 return m
 
\end{lstlisting}
\caption{\texttt{rec1} benchmark example.}
\end{subfigure}
\quad
\begin{subfigure}[b]{0.45\linewidth}
\begin{lstlisting}[style=pwhile,style=framed]
def every(bins):
   var d, number, k
   k $\ass$ 1   
   while (k $\leq$ 5) {     
      if (Bernoulli($\frac{5-k+1}{5}$) {k $\ass$ k + 1}
      number $\ass$ number + 1
   };
   return number 
\end{lstlisting}
\caption{\texttt{every-while} benchmark example.}
\end{subfigure}
\end{figure*}

\end{document}